\def\bra#1{{\left\langle #1 \right|}}
\def\ket#1{{\left| #1 \right\rangle}}
\def\braket#1{{\left\langle #1 \right\rangle}}
\definecolor{amber}{rgb}{1.0, 0.49, 0.0}
\newtheorem{theorem}{Theorem}
\newtheorem{tmain}[theorem]{Theorem}
\newtheorem{example}[theorem]{Example}
\newtheorem{lemma}[theorem]{Lemma}
\newtheorem{corollary}[theorem]{Corollary}
\newtheorem{remark}[theorem]{Remark}
\newtheorem{proposition}[theorem]{Proposition}
\newtheorem{definition}[theorem]{Definition}
\newcommand*{\sumno}{\sum\nolimits}
\newcommand*{\bbE}{\mathbb{E}}
\newcommand*{\bbI}{\mathbb{I}}
\newcommand*{\cE}{\mathcal{E}}
\newcommand*{\cM}{\mathcal{M}}
\newcommand*{\ketbra}[1]{\ket{#1}\!\!\bra{#1}}
\newcommand*{\proj}[1]{\ket{#1}\bra{#1}}
\renewcommand*{\Tr}{\mathrm{Tr}}
\renewcommand*{\coloneqq}{\mathrel{\vcenter{\baselineskip0.5ex \lineskiplimit0pt \hbox{\scriptsize.}\hbox{\scriptsize.}}} =}
\newcommand{\beq}{\begin{equation}}
\newcommand{\eeq}{\end{equation}}
\newcommand{\beqq}{\begin{equation*}}
\newcommand{\eeqq}{\end{equation*}}
\DeclareMathOperator{\id}{id}
\def\thmhead@plain#1#2#3{%
  \thmname{#1}\thmnumber{\@ifnotempty{#1}{ }\@upn{#2}}%
  \thmnote{{ \the\thm@notefont#3}}}
\let\thmhead\thmhead@plain
\newenvironment{manualtmain}[1]{%
  \manualtmaininner
}{\endmanualtmaininner}
\newenvironment{manualdefinition}[1]{%
  \manualdefinitioninner
}{\endmanualdefinitioninner}
\begin{document}

\title{Refined diamond norm bounds on the emergence of objectivity of observables 
}


\author{Eugenia Colafranceschi${}^1$, Ludovico Lami${}^{1,2}$, Gerardo Adesso${}^1$, and Tommaso Tufarelli${}^1$}
\address{${}^1$School of Mathematical Sciences and Centre for the Mathematics and Theoretical Physics of Quantum Non-Equilibrium Systems, University of Nottingham, University Park Campus, Nottingham NG7 2RD, United Kingdom}
\address{${}^2$Institut f\"{u}r Theoretische Physik und IQST, Albert-Einstein-Allee 11, Universit\"{a}t Ulm, D-89081 Ulm, Germany}
\ead{eugenia.colafranceschi@nottingham.ac.uk}

\begin{abstract}
    The theory of Quantum Darwinism aims to explain how our objective classical reality arises from the quantum world, by analysing the distribution of information about a quantum system that is accessible to multiple observers, who probe the system by intercepting fragments of its environment. Previous work 
    showed that, when the number of environmental fragments grows, the quantum channels modelling the information flow from system to observers become arbitrarily close -- in terms of diamond norm distance -- to ``measure-and-prepare'' channels, ensuring objectivity of observables; the convergence is formalised by an upper bound on the diamond norm distance, which decreases with increasing number of fragments. Here, we derive tighter diamond norm bounds on the emergence of objectivity of observables for quantum systems of infinite dimension, providing an approach which can bridge between the finite- and the infinite-dimensional cases. Furthermore, we probe the tightness of our bounds by considering a specific model of a system-environment dynamics given by a pure loss channel. Finally, we generalise to infinite dimensions a result obtained by Brand\~{a}o {\it et al.} [{\it Nat.~Commun.}~{\bf 6}, 7908 (2015)], 
    which provides an operational characterisation of quantum discord in terms of one-sided redistribution of correlations to many parties. Our results provide a unifying framework to benchmark quantitatively the rise of objectivity in the quantum-to-classical transition.
\end{abstract}


\section{Introduction}

Quantum theory has proven to be extremely successful in describing the physical laws of microscopic objects. However, assuming the general validity of quantum theory, the apparent absence of quantum features (such as non-locality and superposition effects) in our everyday classical reality raises the issue of the quantum-to-classical transition: how do physical systems lose their ``quantumness'' with increasing scales and become effectively classical?

 The theory of decoherence~\cite{joos2003decoherence,RevModPhys.75.715,RevModPhys.76.1267,schlosshauer2007decoherence}, which developed significantly over the past decades, has pointed out the key role played in this transition by the interaction of the system with its environment: due to this interaction the two can become entangled, and the quantum correlations so established between the two parties cannot be observed at the level of the system alone. 
 The entanglement with the environment thus \textit{defines} the physical properties we can observe at the level of the system. In particular, only those states that are robust in spite of the interaction with the environment are observable in practice. The environmental monitoring therefore leads to the selection of preferred states (known as \textit{pointer states}~\cite{PhysRevD.24.1516,PhysRevD.26.1862,pointer}) which represent the natural candidates for the classical states that are compatible with our everyday experience. However, decoherence alone does not explain how the striking contrast between classical and quantum states is overcome in the emergence of classicality.
In fact, while classical states can be detected and agreed upon by initially ignorant observers without being perturbed, and thus exist objectively, quantum states are generally affected by the measurement process. It is therefore necessary to clarify how the information about pointer states becomes objective.

 The theory of Quantum Darwinism~\cite{zurek2009quantum,PhysRevLett.93.220401,PhysRevA.72.042113,PhysRevA.73.062310,PhysRevA.91.032122,TK} provides a possible solution by promoting the environment from source of decoherence to carrier of information about the system. In fact, Quantum Darwinism points out that a fundamental consequence of the system-environment interaction is the presence of information about the system encoded in the environment. By intercepting fragments of the environment, it is possible to acquire such information indirectly. In particular, Quantum Darwinism explains how information about the pointer states proliferates in the environment, allowing multiple observers to detect these states without perturbing their existence.

 The Quantum Darwinism approach to the emergence of classicality has been explored theoretically in various specific models~\cite{Blume2008,Zwolak2009,Riedel2010,Riedel2012,Galve2015,Balaneskovic2015,Tuziemski2015,Tuziemski2015b,Tuziemski2016,Balaneskovic2016,Lampo2017,Pleasance2017,le2018objectivity} and has also been the subject of recent experimental tests~\cite{chen2019emergence,ciampini2018experimental,unden2019revealing}. However, the range of applicability of such framework still represents an open issue in the quantum-to-classical transition problem. A recent result by Brand\~{a}o {\it et al.}~\cite{Brandao2015QD} made a significant contribution to it, showing how some classical features emerge in a model-independent way from the quantum formalism alone. Such result relies on the splitting of the objectivity notion into the following statements:
 \begin{itemize}
     \item[-]{\em Objectivity of observables:} multiple observers probing the same system can at most acquire classical information about one and the same measurement;
\item[-]{\em Objectivity of outcomes:} the observers will agree on the result obtained from the preferred measurement.
\end{itemize}

{Brand\~{a}o {\it et al.}\ modeled the information flow from a (finite-dimensional) quantum system to the fragments of its environment via quantum channels, i.e., completely positive trace-preserving (cptp) maps. They showed that, when the number of fragments $N$ becomes large enough, most of these channels are well approximated by specific cptp maps, called ``measure-and-prepare''. The form of such channels ensures the objectivity requirements. This is formalised by a bound on the distance (induced by the so-called diamond norm) between the system-environment channels and the measure-and-prepare ones. It is found that such distance goes to zero as $N \rightarrow \infty$, leading to convergence to objectivity of observables (in the following, we will refer to such a bound as \textit{objectivity bound}).
In~\cite{emergenza}, Knott {\it et al.}\ overcame the finite-dimension restriction by showing that also infinite-dimensional systems, under appropriate energy constraints, exhibit objectivity of observables. Another interesting result in this context was recently obtained by Qi and Ranard~\cite{qi2020emergent}: they showed that, for finite-dimensional systems, the set of channels which do not converge to objectivity is of fixed size $O(1)$, instead of scaling with the number of environmental fragments $N$, as in ~\cite{Brandao2015QD, emergenza}. Their result, which incorporates insights from an earlier version of  the present manuscript, provides to the best of our knowledge the tightest objectivity bound in the finite-dimensional scenario.
 
 In this paper we extend the infinite-dimensional analysis of~\cite{emergenza} and  provide a unified approach to the emergence of the {\em objectivity of observables} in the interaction between a quantum system of {\em arbitrary} dimension and a large number of fragments of its environment.
 
Specifically, we first prove that the objectivity of observables holds true for a wide class of (energy-constrained) infinite-dimensional systems. For such class we obtain tighter bounds on the emergence of this classical feature, compared to those available in the literature. Moreover, our framework can act as a bridge between the finite- and infinite-dimensional scenarios. Our results rely on an infinite-dimensional version of the Choi--Jamio\l kowski isomorphism, adapted to our set of energy-constrained states. This generalises what was done in~\cite{emergenza} for a specific choice of the energy constraint. Moreover, our analysis exploits novel bounds relating the diamond-norm distance of two channels with the distance between their respective Choi--Jamio\l kowski states --~see Aubrun {\it et al.}~\cite{XOR}. Such results are presented in Section~\ref{newbound}.

A relevant issue concerning the emergence of objectivity of observable, not tackled in Refs~\cite{Brandao2015QD, emergenza}, concerns the optimality of the rates at which the convergence to objectivity takes place. In fact, objectivity of observables is regarded as emergent whenever the upper bound on the distance between channels representing the system-environment information flow and the measure-and-prepare ones goes to zero asymptotically. But this does not give information on how well the objectivity bound approximates the considered diamond norm distance. To perform such optimality check, a possible strategy is to derive a \textit{lower bound} for that diamond norm, which turns out to be an upper bound on the speed at which the emergence of objectivity of observables takes place. In Section~\ref{attenuator} we perform this analysis for the specific model of a system-environment dynamics given by a pure loss channel, and show that for such model the rate of convergence to objectivity of observables scales at least as the inverse of the number of environmental fragments.

The final point we address regards the extension to an infinite-dimensional scenario of the operational interpretation of {\it quantum discord}~\cite{ollivier2001quantum,henderson2001classical} derived for finite-dimensional systems by Brand\~{a}o {\it et al.}~\cite{Brandao2015QD}. In particular, it was proven that when information is distributed to many parties on one side of a bipartite system, the minimal average loss in correlations corresponds to the quantum discord. In Section~\ref{discord} we generalise this result to the infinite-dimensional case by exploiting the objectivity bounds proved in Section~\ref{newbound}.

In summary, the paper is organised as follows. Our improved objectivity bounds are presented in Section~\ref{newbound}, followed by the pure loss channel analysis in Section~\ref{attenuator}, while the operational interpretation of quantum discord is found in Section~\ref{discord}. Some technical details behind our proofs are deferred to the Appendixes.

\section{Improved bounds on the emergence of objectivity of observables}
\label{newbound}
The scenario we consider consists of a system $A$, generally infinite-dimensional, and its environment $B$, which is described as a collection of $N$ (possibly infinite-dimensional) subsystems $B_1,...,B_N$, namely the environment fragments. We assume that the system of interest $A$ is initially decorrelated from $B_1,...,B_N$, and that the corresponding state has bounded mean energy (defined via an appropriate Hamiltonian -- see below). The information flow from the system to the whole environment is modelled as a quantum channel, i.e., a cptp map $\Lambda:\mathcal{D}(A) \rightarrow \mathcal{D}(B_1 \otimes \ldots \otimes B_N)$, where $\mathcal{D}(Z)$ denotes the set of density matrices associated with a physical system $Z$. The transfer of quantum information from $A$ to the single environmental fragment $B_j$ is therefore described by the ``subchannel"
$\Lambda_j = \Tr_{ B\backslash B_j }\circ\Lambda$. Objectivity of observables then arises whenever the  maps $\Lambda_j$ become arbitrarily close to measure-and-prepare channels, which allow observers to acquire only classical information about one and the same measurement. These channels are defined as $\cE_j(X) \coloneqq  \sum_l   \Tr (M_l X) \tau_{j,l}$, where $\{M_l\}_l$ is a positive operator-valued measure (POVM) -- crucially independent of the index $j$ -- and $\{\tau_{j,l}\}_l$ is a set of states for subsystem $B_j$.

We shall quantify distinguishability in the space of channels via a distance called {\em energy-constrained diamond norm}~\cite{Shirokov2018,VV-diamond}. This is a modification of the standard diamond norm~\cite{Aharonov1998,Sacchi2005,Watrous2009}, designed to implement a restriction on the average (initial) energy of the quantum system under examination. This is measured by a Hamiltonian, which we take to be an arbitrary self-adjoint operator $H$ with spectrum bounded from below. Without loss of generality, we assume its ground state energy to be positive, i.e.
\begin{equation}
    \inf_{\lambda\in \mathrm{sp}(H)} \lambda = E_0 >0\, ,
    \label{assumption}
\end{equation}
where $\mathrm{sp}(H)$ is the spectrum of $H$. 

\begin{definition} \label{diamondH}
Let $A'$ be a quantum system equipped with a Hamiltonian $H_{A'}$ that satisfies Eq.~\eqref{assumption}, and pick $E > E_0$. Then the energy-constrained diamond norm of an arbitrary Hermiticity-preserving linear map $\Lambda : \mathcal{D}(A') \rightarrow \mathcal{D}(B)$ is defined by
	\beq 
	\|\Lambda\|_{\Diamond H,E}\coloneqq
	\sup_{
		\substack{
			\Tr\left[ \rho H_{A'} \right] \leq E
		}
	} \| (\id_A \otimes \Lambda_{A'})[\rho_{AA'}]\|_1\ ,
	\label{ec diamond norm}
	\eeq
where $A$ is an arbitrary ancillary system, and $ \| \cdot \|_1$ is the one-norm. A recent result by Weis and Shirokov~\cite{Weis-Shirokov} ensures that the input state $\rho_{AA'}$ in Eq.~\eqref{ec diamond norm} can be taken to be pure.
\end{definition}
\noindent

In our analysis, we assume that the Hamiltonian admits a countable set of eigenvectors forming an orthonormal basis $\{\ket{j}, j=0,1, \dots\}$ of the Hilbert space. The index $j$ is allowed to go to infinity, and the case of a finite-dimensional system will be treated by a suitable choice of the Hamiltonian eigenvalues (see below). \\We want to stress that the assumption that a Hamiltonian $H$ has discrete spectrum and is bounded from below is physically well motivated; in fact, it is contained in the so-called Gibbs hypothesis~\cite{tightuniform}:

\medskip

\textit{Gibbs hypothesis.}
A (possibly unbounded) self-adjoint operator $H$ is said to satisfy the Gibbs hypothesis if for every $\beta > 0$ the partition function $Z(\beta) \coloneqq \Tr e^{-\beta H}$ is finite. As a consequence, the state ${1 \over Z(\beta)} e^{-\beta H}$ has finite entropy. Moreover, for every eigenvalue $E$ of the Hamiltonian $H$, the (unique) maximiser $\rho$ of the entropy subjected to the constraint $\Tr \rho H \le E$ is the Gibbs state
 \beq \label{gammamain}
 \gamma(E)= {1 \over Z(\beta (E))} e^{-\beta(E) H},
 \eeq
where $\beta=\beta(E)$ is the solution to the equation $\Tr e^{-\beta H}(H -E) = 0$. 

By setting \begin{equation}
H=\sum_j f_j \ketbra{j}\, ,
\label{H-assumption}
\end{equation}
we also require that the (increasing) sequence of eigenvalues $f_j$ diverges sufficiently rapidly, in formula
\begin{equation}\label{log-condizione}
\left|\sum_j \frac{1}{f_j}\log \frac{1}{f_j}\right|<\infty .
\end{equation}
In the following, we will refer to the entire spectrum $\{f_j\}_j$ by using the short notation $f= \{f_j\}_j$. Eq.~\eqref{log-condizione} clearly implies that $\sum_j \frac{1}{f_j}<\infty$. Notably,  this excludes the physically relevant case $f_j=j$, corresponding to the canonical Hamiltonian on the Hilbert space of a harmonic oscillator. In spite of this drawback, our technical assumption allows us to explore a rich family of constraints that effectively extend and interpolate between previously known bounds. Moreover, a slight modification of our proof technique allows us to deal with the excluded case $f_j=j$ as well; for details, see the end of this section.
\\Before proceeding with the presentation of our results, we want to clarify that the positive operator $H$ and the scalar threshold $E$ do not actually have any dynamical characterization. In fact, their role in our framework is simply to confine the bulk of a state's probability mass to a finite-dimensional subspace in a smooth and well-defined way.

We now introduce some technical elements and definitions that will enter our main results on the emergence of objectivity of observables, stated in Theorem~\ref{main_th_f}. We start by considering a special class of entangled states featuring an $f$-dependent tail in the Hamiltonian eigenbasis:
 \beq \label{phi}
 \ket{\phi} \coloneqq c_f \sum_{j=0}^{\infty} \phi_j \ket{j,j}_{AA'}\, ,
 \eeq
with $\phi_j^2 \coloneqq 1 / f_j$ and $c_f \coloneqq \left(\sum \frac{1}{f_j}\right)^{-\frac12}$. In our derivation, an important role will be played by the local von Neumann entropy of $\ket{\phi}$, given by
\beq\label{entropy}
\sigma \coloneqq S\big(\Tr_{A'}\ketbra{\phi}_{AA'}\big) = - \sum_j\frac{c_f^2}{f_j} \log \left(\frac{c_f^2}{f_j}\right) < \infty ,
\eeq
where the last inequality follows from Eq.~\eqref{log-condizione}.

A useful technical tool in our work is the $d-$dimensional truncation of our entangled state $\ket{\phi}$, which can be obtained as $(\Pi_d \otimes \id) \ket{\phi}=(\id\otimes \Pi_d) \ket{\phi}=c_f \sum_{j=0}^{d-1} \phi_j \ket{j,j}$, where $\Pi_d=\sum_{j=0}^{d-1}\ketbra{j}$. The `approximation error' associated with this truncation can be quantified as follows:

 \begin{definition}\label{deftail}
 	The tail of our entangled state $\ket{\phi}$, dependent on the truncation dimension $d$, is defined as
 	\begin{equation}\label{tail}
 	\epsilon_d \coloneqq \big\|\left((\id-\Pi_d) \otimes \id\right) \ket{\phi} \big\| = c_f \sqrt{\sum_{j=d}^{\infty}\frac{1}{f_j}}
 	\end{equation}	
 \end{definition}
\noindent
The $f$-dependent entangled state $\ket{\phi}$ allows us to consider a modified version of the Choi--Jamio\l kowski states~\cite{WatrousNotes2011,emergenza} (\textit{$f$-Choi states} for brevity -- see below), that will be crucial to prove Theorem~\ref{main_th_f}:
\begin{definition}\label{defchoi}
	The \textit{modified Choi--Jamio\l kowski state} of a cptp map $\Lambda : \mathcal{D}(A') \rightarrow \mathcal{D}(B)$, for a given sequence of Hamiltonian eigenvalues $f=\{f_j\}_j$, is defined as
	\begin{equation}
	J_f(\Lambda) \coloneqq \id_A \otimes \Lambda_{A'} [\ketbra{\phi}] ,
	\end{equation}
	where $\ket{\phi}$ is given in Eq.~\eqref{phi}.
\end{definition}
\noindent

Having introduced all the required ingredients, we can now state the following theorem.
\begin{tmain}\label{main_th_f}
Let $A$ be a quantum system equipped with a Hamiltonian $H_A$  which satisfies the Gibbs hypothesis and which, when written as in Eq.~\eqref{H-assumption}, also satisfies Eq.~\eqref{log-condizione}. Consider an arbitrary cptp map $\Lambda  : \mathcal{D}(A) \rightarrow \mathcal{D}(B_1 \otimes \ldots \otimes B_N)$, and define the effective dynamics from $\mathcal{D}(A)$ to $\mathcal{D}(B_j)$ as $\Lambda_j \coloneqq \Tr_{ B\backslash B_j} \circ \Lambda$. For an arbitrary number $0< \delta <1$, there exists a POVM $\{M_l\}_l$ and a set $S \subseteq \{1,...,N\}$, with $|S| \geq (1-\delta)N$, such that, for all $j \in S$ and for any integer truncation dimension $d\geq 0$, we have that
\beq
\|\Lambda_j - \cE_j  \|_{\diamond H,E} \leq \frac{\zeta}{\delta} ,
\label{almost m&p}
\eeq
where the measure-and-prepare channel $\cE_j$ is given by
\beq \label{m&p channel}
\cE_j(X) \coloneqq  \sum_l   \Tr (M_l X) \tau_{j,l}
\eeq
for some family of states $\tau_{j,l} \in \mathcal{D}(B_j)$, and
\beq \label{zeta_main}
\zeta = \kappa d \left( \frac{ E^2 \sigma}{N c_f^4} \right)^{1/3} + \frac{4E}{c_f^2}\, \epsilon_d ,
\eeq
where $c_f$ is the normalization factor introduced in Eq.~\eqref{phi}; $\epsilon_d$ is given in Definition~\ref{deftail}; $\sigma$ is defined by Eq.~\eqref{entropy} and $\kappa\coloneqq 3\left(16\ln(2)\right)^{1/3}$ is a universal constant.
\end{tmain}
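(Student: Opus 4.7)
The plan is to lift the multipartite mutual-information argument underpinning the Brand\~{a}o--Harrow redistribution-of-correlations theorem to the infinite-dimensional Choi state produced by the modified isomorphism $J_f$, and then to translate the resulting trace-norm estimate back into an energy-constrained diamond-norm bound via the Aubrun \emph{et al.}\ inequality alluded to earlier in Section~\ref{newbound}.

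First, I would recast the problem on the Choi side. Let $\rho_{AB_1\cdots B_N}\coloneqq J_f(\Lambda)=(\id_A\otimes\Lambda_{A'})[\ketbra{\phi}]$, whose reduced state on $A$ is $\varrho_A=\Tr_{A'}\ketbra{\phi}$, with von Neumann entropy exactly $\sigma$ by Eq.~\eqref{entropy}. The reduced Choi state for the $j$-th fragment is $\rho_{AB_j}=J_f(\Lambda_j)$, and any measure-and-prepare channel $\cE_j$ of the form~\eqref{m&p channel} has modified Choi state $J_f(\cE_j)=\sum_l \sigma_l^A\otimes\tau_{j,l}$ for certain operators $\sigma_l^A$ depending only on $M_l$ and $\varrho_A$. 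The task therefore reduces to producing a single POVM $\{M_l\}$, independent of $j$, and states $\tau_{j,l}$ such that $\|\rho_{AB_j}-J_f(\cE_j)\|_1$ is small for most $j\in\{1,\dots,N\}$.

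Second, I would perform a finite-dimensional truncation and invoke a Brand\~{a}o--Harrow style averaging. Replacing $\ket{\phi}$ by its $d$-dimensional truncation $(\Pi_d\otimes\id)\ket{\phi}$ introduces a vector error of norm $\epsilon_d$ by Definition~\ref{deftail}, hence an $O(\epsilon_d)$ error at the Choi-state level. On the truncated state $A$ is effectively $d$-dimensional, so the multipartite mutual information obeys $I(A:B_1\cdots B_N)\le 2\sigma+o(1)$; subadditivity yields $\sum_j I(A:B_j)\le 2\sigma+o(1)$, and Markov's inequality extracts a set $S$ with $|S|\ge (1-\delta)N$ on which $I(A:B_j)\lesssim \sigma/(\delta N)$. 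Applying an informationally complete POVM on $A$ and converting through the Holevo-type bound, which introduces the linear $d$ prefactor and weakens the Pinsker exponent from $1/2$ to $1/3$ exactly as in the finite-dimensional Brand\~{a}o \emph{et al.}\ argument, yields a trace-distance bound of the form $\kappa\,d(\sigma/(\delta N))^{1/3}+O(\epsilon_d)$ for every $j\in S$ and an optimal choice of $\tau_{j,l}$. Finally, I would apply the Aubrun \emph{et al.}\ inequality from Section~\ref{newbound}, which converts Choi-state trace distance into energy-constrained diamond-norm distance with a prefactor of order $E/c_f^2$; collecting constants produces
\[
\|\Lambda_j-\cE_j\|_{\diamond H,E}\le\frac{1}{\delta}\left[\kappa\, d\!\left(\frac{E^2\sigma}{N c_f^4}\right)^{\!1/3}+\frac{4E}{c_f^2}\,\epsilon_d\right]=\frac{\zeta}{\delta}.
\]

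The main obstacle will be step two: the Brand\~{a}o--Harrow averaging argument is native to finite dimensions, so one must carefully verify that the POVM produced by the finite-dimensional analysis on the truncation remains a legitimate POVM on the full infinite-dimensional $A$, that the output states $\tau_{j,l}$ extracted from the truncated Choi state still serve as the preparation states of a bona fide measure-and-prepare channel on the full space, and that the prefactors tracked through the $1/3$-power conversion produce exactly the constants $\kappa$, $c_f$ and $E$ appearing in $\zeta$. The log-condition~\eqref{log-condizione} is what guarantees $\sigma<\infty$, making the first term in $\zeta$ meaningful; the Gibbs hypothesis ensures the energy-constrained Aubrun \emph{et al.}\ bound is available; and the freedom to choose $d$ is what makes the result genuinely infinite-dimensional, since balancing the bulk term (linear in $d$) against the tail term (shrinking like $\epsilon_d$) determines the convergence rate for any admissible spectrum $\{f_j\}$.
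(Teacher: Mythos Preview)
Your overall architecture is right---work on the modified Choi state $\rho_{AB_1\ldots B_N}=J_f(\Lambda)$, approximate each $\rho_{AB_j}$ by a separable state with $j$-independent $A$-structure, then convert back to the energy-constrained diamond norm---but the heart of step two is broken. The claimed inequality $\sum_j I(A:B_j)\le 2\sigma$ is false: take $\Lambda$ to be the isometry $\ket{k}\mapsto\ket{k}^{\otimes N}$, so that $J_f(\Lambda)$ is a weighted GHZ state; then $I(A:B_j)=\sigma$ for every $j$ and the sum is $N\sigma$. In that very example each $\Lambda_j$ is already exactly measure-and-prepare, yet your criterion would flag all of them as bad. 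What the paper (following Brand\~{a}o \emph{et al.}) actually uses is the chain rule for \emph{conditional} mutual information: one selects a subset $J=\{j_1,\dots,j_{q-1}\}$ of size at most $m$, measures those fragments with quantum-to-classical channels, and exploits $\sum_i I(A:B_{j_i}\mid \text{previous measured outcomes})\le \sigma$ to find $q\le m$ with small conditional term. The outcome $z$ of the measurements on $J$ then furnishes the approximation $\rho_{AB_j}\approx \bbE_z\,\rho_A^z\otimes\rho_{B_j}^z$ for $j\notin J$, and it is the post-measurement states $\rho_A^z$ that define the common POVM $\{M_l\}$---this conditioning mechanism, absent from your sketch, is precisely what makes the POVM independent of $j$.

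Two further corrections. The $1/3$ exponent does not arise from ``weakening Pinsker'': the conditional argument delivers a bound $\sqrt{2\ln 2\,\sigma/m}$ on the \emph{measured} trace distance, and the $1/3$ appears only after adding the trivial cost $2m/N$ from the discarded indices $j\in J$ and optimising over $m$; Markov's inequality is applied at the very end, not before the averaging. And the two conversion lemmas are inverted in your account: the Aubrun \emph{et al.}\ inequality $\|Z\|_1\le 4d^{3/2}\|Z\|_{\mathrm{LO}}$ is what upgrades the measured trace distance to the full $\|\rho_{AB_j}-J_f(\cE_j)\|_1$ (this is the origin of the factor $d$ in $\zeta$), whereas the passage from Choi trace distance to $\|\cdot\|_{\diamond H,E}$, with prefactor $E/c_f^2$, is a separate and more elementary estimate.
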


The complete proof is detailed in~\ref{appendix:a}. In what follows we provide the key ideas behind it.

\begin{proof}[Outline of the proof of Theorem~\ref{main_th_f}]
We start by proving that the 1-norm of an operator $L$, given by the difference between two $f$-Choi states, can be bounded as follows: 
\beq \label{ineqL}
\|L\|_1 \leq 4 d^\frac{3}{2} \max_{\cM} \| \id \otimes \cM [L] \|_1 + 4 \epsilon_d\, .
\eeq
Here, $\cM$ is an arbitrary measurement, thought of as a quantum-to-classical channel, $d$ is the truncation dimension and $\epsilon_d$ is given in Definition~\ref{deftail}. We then show that the distance between two channels is bounded by that between their $f$-Choi states:
	\beq\| \Lambda_0 - \Lambda_1 \|_{\Diamond H,E} \leq \frac{E}{c_f^2} \| J_f(\Lambda_0) - J_f(\Lambda_1) \|_1.\eeq 
The key ingredient of the proof is a result (Lemma~\ref{cor:brandao1} in~\ref{appendix:a}) which introduces a set of quantum-to-classical channels $\{{\cal M}_j|j\in J\}$ acting on a subset $J$ of the environment fragments $B_1;, \dots, B_N$. Let $z$ be the outcome of such set of measurements, then the state $\bbE_z \rho_A^z \otimes \rho_{B_j}^z$ can be proved to be the modified Choi--Jamio\l kowski state of a measure-and-prepare channel $\cE_j$ with POVM  independent of $j\notin J$. The Lemma bounds the quantity 
	\beq \label{lhslemma}
\bbE_{j \notin J} \max_{{\cal M}_j} \left\| \id \otimes {\cal M}_j \left[ \rho_{AB_j} - \bbE_z \rho_A^z \otimes \rho_{B_j}^z \right] \right\|_1 
\eeq
through a function of the entropy for system $A$; in~\eqref{lhslemma}, the expectation value is with respect to the uniform distribution over $\{1,\ldots, N\}\setminus J$, and the maximum is taken over all quantum-to-classical channels.\\ Since $\rho_{AB_j}=J_f(\Lambda_j)$ and $\bbE_z \rho_A^z \otimes \rho_{B_j}^z=J_f(\cE_j)$, by combining Lemma~\ref{cor:brandao1} with the previous inequalities we find a bound for the quantity $	\bbE_{j \notin J}\| \Lambda_j - \cE_j  \|_{\diamond H,E}$. We then easly obtain 
 $\bbE_j \| \Lambda_j - \cE_j  \|_{\diamond H,E}\leq \zeta$, where the index $j$ has uniform probability distribution over $\{1,...,N\}$, and $\zeta$ is given by Eq.~\eqref{zeta_main}.\\
We conclude the proof by applying Markov's inequality. In fact, the statement of the theorem is equivalent to the following one:
\beq \text{P}\left(\|\Lambda_j - \cE_j  \|_{\diamond H,E} \ge \frac{\zeta}{\delta} \right) \le \delta. \eeq\end{proof}

The result of Theorem~\ref{main_th_f} can be interpreted as follows. Fixing $0<\delta<1$ and $E$, and letting the number of environmental fragments $N$ tend to infinity, we have that the dynamical maps connecting the system to each of the fragments become indistinguishable from measure-and-prepare channels. This statement is true for at least a fraction $1-\delta$ of the sub-environments. Moreover, the measure-and-prepare channels involved are all defined by the same POVM $\{M_l\}_l$. For $\delta\ll 1$ this means that almost all observers probing the system by intercepting fragments of the environment can at most acquire classical information about one and the same measurement $\{M_l\}_l$ -- i.e., objectivity of observables holds for such observers. 

To illustrate the application of the results derived in this section to concrete physical models, we now consider some relevant examples.

\paragraph*{Case $f_j=j^2$, with $j\ge 1$ (particle in a box).} A quantum particle of mass $m$ confined in a box of length $L$ has Hamiltonian eigenvalues $f_j= \gamma j^2$, where $\gamma$ is a constant given by $\gamma =\frac{\hbar^2\pi^2}{2 m L^2}$. Choosing units such that $\gamma=1$ we have that $f_j=j^2$, and Theorem~\ref{main_th_f} turns out to hold for
 \beq 
 \label{particlebox}
 	\zeta = \alpha \left( \frac{ \sigma d^3 E^2 }{N} \right)^{1/3} +\beta E~\sqrt{\psi^{(d)}(1)},
 \eeq
where $\psi^{(n)}(z)$ is the $n^{th}$ derivative of the digamma function $\psi(z)$, $\sigma \approx 2.4$ and $\alpha,\beta$ are universal constants: $\alpha\coloneqq(12 \pi^4 )^{\frac{1}{3}}$, $\beta\coloneqq\sqrt{ \frac{8 \pi^2}{3}}$. 
In Figure~\ref{fig:plotbox} we plot the objectivity bound $\frac{\zeta}{\delta}$ provided by a numerical optimisation of Eq.~\eqref{particlebox} over $d$, with $E=1$ and $\delta = 0.01$.
 
\begin{figure}
\centering
  \includegraphics[width=0.55\linewidth]{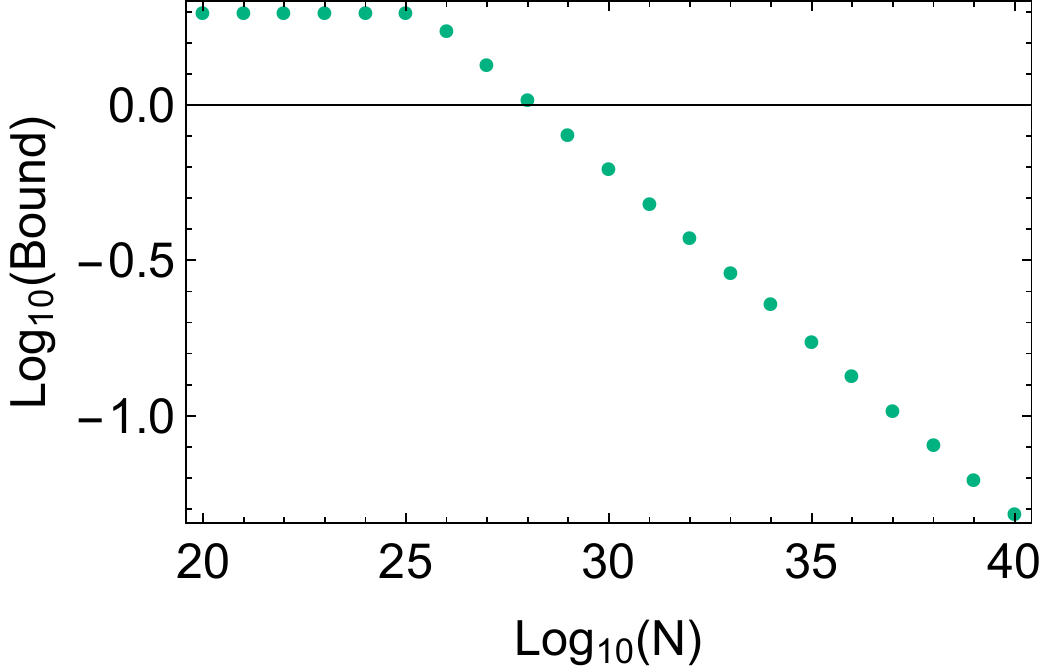}
  \caption{Case $f_j=j^2$. We plot the upper bound on $\| \Lambda_j - \cE_j \|_{\Diamond H,E}$ for $E=1$ and $\delta = 0.01$, obtained through numerical optimisation of Eq.~\eqref{particlebox} over the truncation dimension $d$.}
  \label{fig:plotbox}
\end{figure}
 
\paragraph*{Case of a $D$-dimensional system.} In this example, we show that our methods can bridge finite and infinite dimensions. Specifically, let us consider the sequence of Hamiltonian eigenvalues 
\begin{equation}
f_j=\left\{\begin{array}{ll}
1& j\leq D-1\,,\\
\frac{e^{\omega j}}{1-e^{-\omega}}& j\geq D\,,
\end{array}\right.
\end{equation}
where $D$ is a parameter that will turn out to be the actual Hilbert space dimension when $\omega\to\infty$. We assume $d\geq D$ for convenience. We obtain that
\begin{equation}\label{f_zeta}
\zeta = \left( {432 E^2(D+e^{-\omega D})^2 d^3 s \over N} \right)^{\!\!\frac13}\!\! + 4E \sqrt{\frac{D+e^{-\omega D}}{e^{\omega d}}}
\end{equation}
where 
\begin{align*}
s\coloneqq&\ \ln(2) \sigma\\
=&\ \ln\left(D\!+\!e^{-\omega D}\right)+{\omega\left(1\!-\!D\!+\!De^{\omega}\right)\over (e^\omega\!-\!1)(D e^{\omega D}\!+\!1)}-{\ln(1\!-\!e^{-\omega})\over(1\!+\!D e^{\omega D})}
\end{align*}
(see Example~\ref{bridge} in~\ref{appendix:a} for details). Taking the limit $\omega\to\infty$ we have that
\begin{align}\label{f_zeta_limit}
\lim_{\omega\to\infty}\zeta = \left( {432 E^2D^2 d^3 \ln D \over N} \right)^{1/3}.
\end{align}
In this scenario, $\Tr[\rho H]\leq E$ translates into the condition $\Tr[\rho]\leq1$, plus the additional constraint that the support of $\rho$ is contained in the $D$-dimensional subspace spanned by $\left\{\ket0,\ket1,...,\ket{D-1}\right\}$. Physically, the considered limit corresponds to raising all the Hamiltonian eigenvalues with $j\geq D$ to unattainably high energies, so that only levels with $j<D$ can be populated.

In the finite-dimensional scenario, the tightest objectivity bound to date has been recently obtained by Qi and Ranard~\cite{qi2020emergent}. The Qi--Ranard result can be compared to ours by making the substitution $|R|=1$, $|Q|= N\delta$ in Eq.~(12) of Ref.~\cite{qi2020emergent}. For the present comparison, we have to consider two possible expressions for the parameter $\Omega$ (which appears in Eq.~(13) of Ref.~\cite{qi2020emergent}): $\Omega=D^2$ and $\Omega=4D^{3/2}$, where $D$ is the dimension of system $A$; the corresponding expressions for the Qi--Ranard bound are the following:
	 \begin{align}\label{b1}
&\Omega = D^2 : &b_1 =  ~\left(\frac{2 D^6 \ln D}{ N \delta}\right)^{1/2} \\&\Omega = 4D^{3/2}: &b_2 = 4 ~\left(\frac{2 D^5 \ln D}{ N\delta}\right)^{1/2} \label{b2}
	\end{align}
Expressions $b_1$ and $b_2$ can be directly compared with our bound $b\coloneqq \frac{\zeta}{\delta}$, with $\zeta$ given by Eq.~\eqref{f_zeta_limit}, by taking $d=D$ (which clearly gives us the tightest bound for the range $d\geq D$) and by choosing $E=1$. The comparison is meaningful only in the regime in which the bounds are non-trivial, namely smaller than 2. By applying this requirement to our bound $b$ we obtain a threshold value for $N$ which depends on $D$: $N > \frac{54 D^5 \ln D}{ \delta^3}$. We find that our bound $b$ is never stronger than $b_2$ in its non-triviality regime. Note that $b_2$ was derived by the authors of~\cite{qi2020emergent} on the basis of our suggestion to exploit the Aubrun {\it et at.} result ~\cite{XOR} in this context. Hence $b_2$, which is the tightest known objectivity bound in finite dimensions, is a synthesis of independent insights from the analysis of Qi and Ranard and our work. The less tight bound $b_1$ was instead obtained by Qi and Ranard independently of the present work. There exists a regime in which our bound $b$ is non-trivial and stronger than $b_1$; however, this regime may be of little relevance in experimental contexts, as it requires a very large threshold for $N$, e.g.~rising above Avogadro's number for $\delta<0.1$.\smallskip

We now return to the case $f_j=j$, in which the condition $\sum{1\over f_j}<\infty$ is not satisfied. In this case the $f$-Choi states cannot be defined, and we replace them with truncated (standard) ones. To derive the objectivity bound we go through the same conceptual steps followed by Knott {\it et al.}\ in~\cite{emergenza}. However, we bound the distance between truncated Choi--Jamio\l kowski states more restrictively, by exploiting a result by Aubrun {\it et at.}~\cite[Corollary~9]{XOR}. We are then able to derive an objectivity bound that, for $d>16$, is tighther than the one obtained in~\cite{emergenza}. In particular, we find that Theorem~\ref{main_th_f} holds for $f_j=j$ with
 \beq
 \label{newnbound}
 	\zeta = \lambda \left( \frac{  d^5 \log(d)}{N} \right)^{1/3} + 4 \sqrt{\frac{E}{d}},
 \eeq
where $\lambda\coloneqq 3  (16 \ln(2))^{\frac{1}{3}}$. 
In Figure~\ref{fig:plotn} we compare the mean energy bound provided in~\cite{emergenza} (blue dots, uppermost curve) with the refined one we obtain from Eq.~\eqref{newnbound} (red dots, lowermost curve). Both bounds are numerically optimised over $d$ by setting $E=1$ and $\delta = 0.01$. 

\begin{figure}
\centering
  \includegraphics[width=0.55\linewidth]{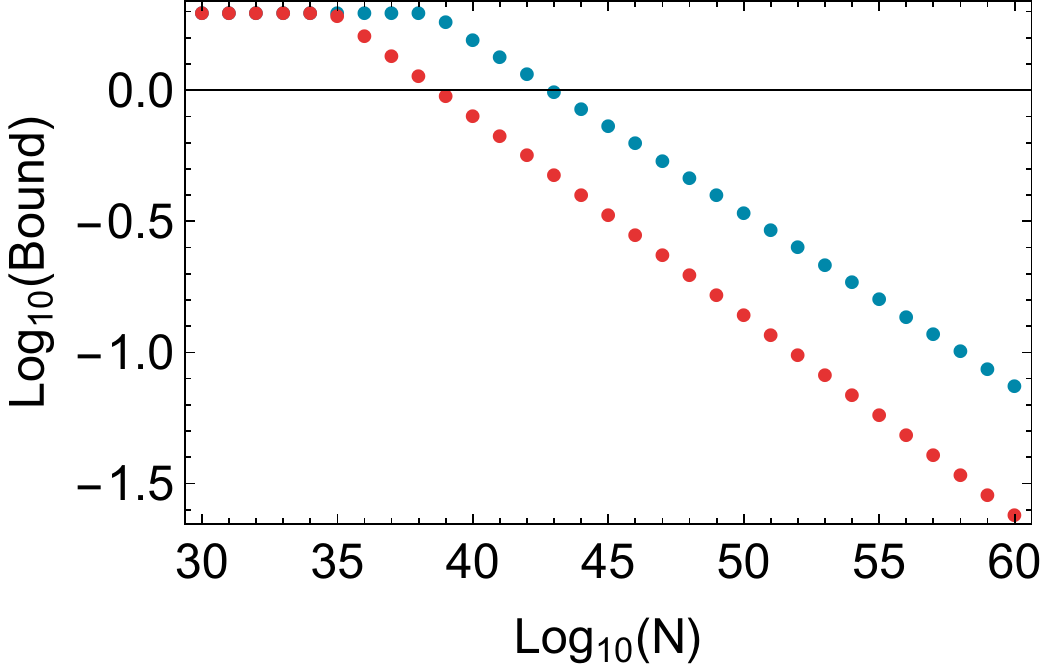}
  \caption{Case $f_j=j$. We compare the upper bound on $\| \Lambda_j - \cE_j \|_{\Diamond H,E}$ for $E=1$ and $\delta = 0.01$, obtained by numerical optimisation of Eq.~\eqref{newnbound} over $d$ (red dots, lowermost curve), with the bound obtained by Knott {\it et al.} in~\cite{emergenza}  (blue dots, uppermost curve).}
  \label{fig:plotn}
\end{figure}

\section{Testing optimality of the objectivity bound with an $\boldsymbol{N}$-splitter} \label{attenuator}
The emergence of objectivity of observables, as explored in the previous section as well as in~\cite{Brandao2015QD,emergenza}, is expressed by an upper bound on the distance between the effective dynamics $\Lambda_j$ and the measure-and-prepare channels $\cE_j$, which goes to zero as the number $N$ of environment fragments gets large. We now probe the optimality of such statement by looking at a lower bound for the distance between $\Lambda_j$ and $\cE_j$ in a specific example. This gives information on the speed at which the emergence of objectivity of observables takes place. We carry out this analysis for a system-environment interaction modelled by a pure loss channel. In detail, both our system $A$ and each of its sub-environments $B_1,...,B_N$ will be single bosonic modes with associated annihilation operators $a_0$ and $a_1,...,a_N$, respectively. The canonical commutation relations read $[a_j,a_k^\dagger]=\delta_{jk}$. We consider the quantum channel
\beq \label{channel}
\Lambda_{A\rightarrow B_1,...,B_N}(\cdot) \coloneqq U\left((\cdot)_A \otimes  \bigotimes_{j=2}^N|0\rangle \langle 0|_{B_j} \right)U^\dagger,
\eeq
where $U$ is the symplectic unitary which implements a $N-$splitter from $\mathcal{D}(A \otimes B_2 \otimes \ldots \otimes B_N)$ to $\mathcal{D}(B_1 \otimes \ldots \otimes B_N)$. In terms of bosonic operators (in the Heisenberg picture) this transformation takes the explicit form $U^{\dagger}a_l U = \sum_m V_{lm}a_m$, with $V_{lm}={1 \over \sqrt{N}}\exp{2 \pi i lm \over N}$. Since the initial environment state is the vacuum, the map in Eq.~\eqref{channel} corresponds to a pure loss channel of parameter $\frac{1}{N}$~\cite{HOLEVO-CHANNELS}. Varying the environment state one obtains instead a general attenuator~\cite{Koenig2015,Jack2018,KK-VV,G-dilatable,Lim2019,QCLT}. The reduced map $ \Lambda_j:\mathcal{D}(A) \rightarrow \mathcal{D}(B_j)$ is given by $ \Lambda_j = ~ \Tr_{B\backslash B_j}\circ \Lambda$ and has the same form for all $j$, as shown in~\ref{appendix:b1}. As in the previous section, we assume that system $A$ has bounded mean energy. As is typically the case in optical systems, the relevant Hamiltonian is obtained by setting $f_j=j$ (where $j$ may be interpreted as the number of photons). We show that, for a maximum energy threshold $E$ on system $A$ satisfying $E\ge \frac{2}{N}$, the channels $\Lambda_j$ approach the measure-and-prepare ones no faster than $\sim N^{-1}$. In particular, we can prove the following proposition.
\begin{proposition} \label{pure loss prop}
	Consider the cptp map $\Lambda  : \mathcal{D}(A) \rightarrow \mathcal{D}(B_1 \otimes \ldots \otimes B_N)$ given by Eq.~\eqref{channel}, and define $\Lambda_j \coloneqq \Tr_{B \backslash B_j} \circ \Lambda$ as the effective dynamics from $\mathcal{D}(A)$ to $\mathcal{D}(B_j)$. Let $E$ be the energy bound for system $A$, which is assumed to satisfy $E\ge \frac{2}{N}$. Then, for all POVMs $\{M_l\}_l$ and states $\{\tau_{j,l}\}_l \in \mathcal{D}(B_j)$, it holds that
	\beq \label{lower}
	\min_{j=1,\ldots,N} \left\|\Lambda_j - \cE_j  \right\|_{\diamond H,E} \geq  \dfrac{1}{2N},
	\eeq
	where the measure-and-prepare channel $\cE_j$ is given by Eq.~\eqref{m&p channel}.
\end{proposition}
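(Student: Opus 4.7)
The plan is to exhibit a single low-energy entangled input that witnesses a certifiably negative partial transpose of the pure-loss output, and then convert this non-separability into a trace-norm lower bound via a PPT witness. By the explicit form of $V_{lm}=N^{-1/2}\,e^{2\pi i lm/N}$, all reduced channels $\Lambda_j=\Tr_{B\setminus B_j}\circ\Lambda$ coincide: each acts as a single-mode pure loss channel of transmissivity $\eta=1/N$. The $\min_j$ in the claim therefore collapses, and it suffices to prove $\|\Lambda_1-\cE_1\|_{\Diamond H,E}\ge 1/(2N)$ for an arbitrary POVM $\{M_l\}_l$ and an arbitrary family of states $\{\tau_{1,l}\}_l$.

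For the witness input I take $\ket{\psi}_{AA'}=\sqrt{1-p}\,\ket{00}+\sqrt{p}\,\ket{11}$ with $p=1/(2N)$, where both $A$ and $A'$ are restricted to the Fock subspace spanned by $\{\ket{0},\ket{1}\}$. Its reduced state on $A'$ has mean energy $p=1/(2N)\le 2/N\le E$, so the energy constraint is met. Using the pure-loss identities $\Lambda_1(\ket{0}\bra{1})=N^{-1/2}\ket{0}\bra{1}$ and $\Lambda_1(\ket{1}\bra{1})=(1-1/N)\ket{0}\bra{0}+(1/N)\ket{1}\bra{1}$, which follow directly from the Kraus representation of pure loss, I would write $\rho\coloneqq(\id_A\otimes\Lambda_1)(\ket{\psi}\bra{\psi})$ in the $\{\ket{00},\ket{01},\ket{10},\ket{11}\}$ basis. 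Its partial transpose $\rho^{T_B}$ has a single nontrivial $2\times 2$ block in the span of $\{\ket{01},\ket{10}\}$, with trace $p(1-1/N)$ and determinant $-p(1-p)/N$. Direct diagonalization gives a negative eigenvalue $\lambda_-$, and the specific choice $p=1/(2N)$ is the smaller root of the quadratic $4p^{2}-2p(1+1/N)+1/N=0$, which is precisely the condition that forces $\lambda_-=-1/(2N)$.

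Let $\ket{v}=v_1\ket{01}+v_2\ket{10}$ be the normalized eigenvector of $\rho^{T_B}$ associated with $\lambda_-$, and define the witness $W\coloneqq(\ket{v}\bra{v})^{T_B}$. For any measure-and-prepare channel $\cE_1$, the image $\sigma\coloneqq(\id_A\otimes\cE_1)(\ket{\psi}\bra{\psi})=\sum_l\Tr_{A'}[(I_A\otimes M_l)\ket{\psi}\bra{\psi}]\otimes\tau_{1,l}$ is separable across $A|B$, so $\sigma^{T_B}\ge 0$ and $\Tr(W\sigma)=\bra{v}\sigma^{T_B}\ket{v}\ge 0$; conversely $\Tr(W\rho)=\lambda_-=-1/(2N)$. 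H\"older's inequality then yields $\|\rho-\sigma\|_1\ge|\Tr[W(\rho-\sigma)]|/\|W\|_\infty\ge 1/(2N\|W\|_\infty)$. Since $\ket{v}$ has Schmidt coefficients $|v_1|,|v_2|$, the operator $W$ has eigenvalues $v_1^{2},v_2^{2},\pm|v_1v_2|$, and explicit computation gives $\|W\|_\infty=\max(v_1^{2},v_2^{2})=(2N-1)/(3N-1)\le 1$. Combining, $\|\rho-\sigma\|_1\ge(3N-1)/[2N(2N-1)]\ge 1/(2N)$, and since the witness input is admissible, $\|\Lambda_1-\cE_1\|_{\Diamond H,E}\ge\|\rho-\sigma\|_1\ge 1/(2N)$, which together with the symmetry step delivers the proposition.

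The main obstacle is the algebraic tuning of $p$: its value is not obvious a priori, and is guided by the requirement that the negative eigenvalue of the $\{\ket{01},\ket{10}\}$ block hit exactly $-1/(2N)$ while the energy $p$ stays within the budget $E\ge 2/N$. One must also track $\|W\|_\infty$ carefully so that no factor eats into the bound; happily it remains $\le 1$. A fallback, should the witness norm turn out inconvenient, would be to use the other root $p=1/2$ of the same quadratic when $E\ge 1/2$ (covering small $N$) and $p=1/(2N)$ otherwise, but the uniform choice $p=1/(2N)$ handles all cases at once.
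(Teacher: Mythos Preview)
Your argument is correct and provides a genuinely different route from the paper's proof. Both proofs begin with the same observation that all reduced maps $\Lambda_j$ coincide (each is a pure loss channel of transmissivity $1/N$), and both exploit the fact that any measure-and-prepare channel is entanglement-breaking, so that $(\id\otimes\cE_j)(\ketbra{\psi})$ is separable. From there the strategies diverge. The paper feeds in the continuous family of two-mode squeezed vacuum states $\ket{\psi_r}$, passes from the trace norm to the operator norm via $\|X\|_1\ge 2\|X\|_\infty$ for traceless $X$, and then bounds $\langle\phi_s|\omega|\phi_s\rangle$ for separable $\omega$ by the maximal Schmidt coefficient of $\ket{\phi_s}$; an explicit optimisation over the squeezing parameters (relying on a Gaussian overlap formula from the extendibility literature) produces the intermediate bound $1/(2N-1)\ge 1/(2N)$, and the hypothesis $E\ge 2/N$ is what allows the supremum over $r$ to be taken unconstrained. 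Your proof instead stays entirely in the qubit sector: a single input $\sqrt{1-p}\,\ket{00}+\sqrt{p}\,\ket{11}$ with $p=1/(2N)$ produces an output with a negative partial-transpose eigenvalue of exactly $-1/(2N)$, and a PPT witness built from the associated eigenvector converts this into the trace-norm lower bound, with $\|W\|_\infty=(2N-1)/(3N-1)\le 1$ ensuring no loss. Your approach is more self-contained (no Gaussian formalism, no external overlap calculation) and makes the role of the energy constraint more transparent, since the witness state has energy $p=1/(2N)\le E$ directly; the paper's route, on the other hand, connects naturally to the Gaussian extendibility literature and yields a marginally sharper intermediate estimate. One minor point worth noting: the states $\tau_{1,l}$ defining $\cE_1$ live in the full Fock space, so $\sigma$ need not be supported on the qubit sector; however, since your witness $W$ is supported there and separability of $\sigma$ still gives $\sigma^{T_B}\ge 0$ on the full space, the inequality $\Tr(W\sigma)\ge 0$ remains valid, and $\|W\|_\infty$ is unaffected.
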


\begin{remark}
The assumption $E\ge \frac{2}{N}$ in Proposition~\ref{pure loss prop} is not strictly necessary yet it significantly simplifies the calculation.
\end{remark}

\begin{proof}[Outline of the proof of Proposition~\ref{pure loss prop}]
We look at the quantity
\beq\mu(\Lambda) \coloneqq \inf_{M,\tau_{j}} 	\|\Lambda_j - \cE_{M,\tau_j}  \|_{\diamond H,E},\eeq
where the infimum is over the set of possible POVMs $M=\{M_l\}_l$ and states $\tau_j=\{\tau_{j,l}\}_l $ entering the definition of the measure-and-prepare channel $\cE_j$. We start by restricting the evaluation of the diamond norm to two-mode squeezed vacuum states:  $|\psi_r\rangle=\frac{1}{\cosh(r)} \sum_{n} \tanh(r)^n |nn\rangle$. Since the channels $\cE_{j}$ are entanglement-breaking, the infimum on $M$ and $\tau_{j}$ translates into an infimum on the set of separable states (with respect to the bipartition $C:B_j$, where $C$ is the ancillary system entering the definition of the diamond norm):
$(\id\otimes \cE_{j })[\psi_r] = \omega \in \mathrm{SEP}$, with $\psi_r\coloneqq|\psi_r\rangle \langle \psi_r|$. We thus obtain that
\beq \begin{split}\label{int}
\mu(\Lambda)\geq \inf_{\omega \in \mathrm{SEP}} \sup_{
	\substack{
		r:\, \sinh(r)^2 \leq E
	}
}   \|\id \otimes \Lambda_j  [\psi_r] - \omega \|_1\ .\end{split}\eeq 
A lower bound for the $1$-norm in Eq.~\eqref{int} is estimated through the inequality  $\| X\|_1 \geq 2 \| X \|_\infty$, which holds true for any operator $X$ with $\Tr X=0$. The operator norm on the r.h.s.\ is bounded from below by looking at the matrix entries with respect to a second set of two-mode squeezed vacuum states. Upon straightforward calculations, one obtains Eq.~\eqref{lower}.  Details are provided in~\ref{appendix:b2}.
\end{proof}

It is interesting to compare the lower bound in Eq.~\eqref{lower} with the upper bounds on the convergence rate we have found so far, with the goal of estimating the rate at which emergence of objectivity actually takes place.
To estimate an upper bound for the distance $\|\Lambda_j - \cE_j  \|_{\diamond H,E}$ we optimise Eq.~\eqref{newnbound} over $d$ by using the inequality $\ln(d)\le d$, and exploit the fact that, for the model we are considering, all the reduced maps $\Lambda_j$ have the same form.
We thus obtain the following range:
\beq  \dfrac{1}{2N}  \leq
\|\Lambda_j - \cE_j  \|_{\diamond H,E} \leq \mu \left(\frac{E^6}{N}\right)^{1 \over 15},
\eeq
where $\mu<10$ is a constant.



\section{Quantum discord from local redistribution of quantum correlations in infinite dimension}
\label{discord}
Quantum discord~\cite{ollivier2001quantum,henderson2001classical} is regarded as a measure of the purely quantum part of correlations between systems~\cite{modi2012classical,ABC}. Consider two systems $A$ and $B$, collectively described by a state $\rho$; the total amount of correlations between them is quantified by the mutual information $I(A\!:\!B) = S(A)+S(B)-S(AB)$, where $S$ denotes the von Neumann entropy: $S(A) = -\Tr \left[\rho_A \log \rho_A \right]$. The quantum discord between $A$ and $B$ (from the perspective of subsystem $B$) is then defined by
\beq \label{defdiscord}
D(A|B)_{\rho}\coloneqq I(A\!:\!B)_{\rho} - \max_{\Gamma \in QC} I(A\!:\!B)_{(\id\! \otimes \Gamma)(\rho)} , 	
\eeq
where $QC$ refers to quantum-to-classical channels having the form $\Gamma(X) \coloneqq  \sum_k \Tr [N_k X] \ketbra{k}$, with POVM $\{N_k\}_k$. 
The quantum discord $D(A|B)_{\rho}$ thus represents the amount of correlations that is inevitably lost when $B$ is subject to a minimally disturbing local measurement, or, in other words, when $B$ encodes its part of information into a classical system; in this respect, $D(A|B)_{\rho}$ can be thought of as the purely quantum part of correlations between $A$ and $B$ in the state $\rho$. In~\cite{Brandao2015QD} Brand\~{a}o {\it et al.}\ derived an interesting operational interpretation of quantum discord in terms of redistribution of quantum information to many parties. In particular they showed that
\beq \label{Bcor}
\lim_{N \to \infty} \max_{\Lambda_{\!N\!}}\, \bbE_j I(A\!:\!B_j)_{(\id\!\otimes \Lambda_{\!N\!})(\rho)} = \max_{\Gamma \in QC} I(A\!:\!B)_{(\id \!\otimes \Gamma)(\rho)}
\eeq 
where the maximisation is over all maps $\Lambda_N:\mathcal{D}(B) \rightarrow \mathcal{D}(B_1 \otimes \ldots \otimes B_N)$, and $\bbE_j I(A\!:\!B_j)$ is the average mutual information between $A$ and $B_j$ for the uniform probability distribution over $j$. Equation~\eqref{Bcor} shows that, when the share of correlations of $B$ is  redistributed to infinitely many parties $\{B_j\}$, the maximum \textit{average} mutual information accessible through each one of the parties $B_j$ corresponds to the purely classical part of correlations. This result is at the heart of the operational characterisation of quantum discord provided by Brand\~{a}o {\it et al.}~\cite{Brandao2015QD}. In fact, from Eq.~\eqref{Bcor} it follows that
\beq\label{inter}
D(A|B)_{\!\rho
} = \lim_{N \to \infty} \min_{\Lambda_{\!N\!}}\, \bbE_j\! \left( I(A\!:\!B)_{\rho}\! -\! I(A\!:\!B_j)_{(\id\! \otimes \Lambda_{\!N\!}) (\rho)} \right),
\eeq
i.e., $D(A|B)_{\rho}$ is characterised as the minimal average loss in mutual information when $B$ locally redistributes its share of correlations. Brand\~{a}o {\it et al.}~derived Eq.~\eqref{Bcor} as a corollary of the theorem through which they proved emergence of objectivity of observables in finite dimensions ~\cite[Corollary~4]{Brandao2015QD}. In that context, $B$ can be interpreted as the environment of system $A$, which splits into fragments $\{B_j\}$. 

We generalise the above result to an infinite-dimensional scenario, for systems subjected to generic energy constraints. In particular, as infinite-dimensional counterpart of \cite[Corollary~4]{Brandao2015QD}, we prove the following corollary of our Theorem~\ref{main_th_f}:

\begin{corollary}\label{cor}
Let $A$ be a quantum system equipped with a Hamiltonian $H_A$, and $B$ a quantum system equipped with Hamiltonian $H_B$, both satisfying the Gibbs hypothesis. We also assume that $H_B$, when written as in Eq.~\eqref{H-assumption}, satisfies Eq.~\eqref{log-condizione}. Let $\Lambda_N  : \mathcal{D}(B) \rightarrow \mathcal{D}(B_1 \otimes \ldots \otimes B_N)$ be a cptp map, and define $\Lambda_j \coloneqq  \Tr_{B \backslash B_j} \circ \Lambda_N$ as the effective dynamics from $\mathcal{D}(B)$ to $\mathcal{D}(B_j)$. Then for every $\delta > 0$ there exists a set $S \subseteq \{1,...,N\}$ with $|S| \geq (1-\delta)N$ such that for all $j \in S$ and all states  $\rho \in \mathcal{D}(A \otimes B)$ with $\Tr [\rho H_A] \leq E_A, \Tr[\rho_B H_B] \leq E_B$, 
	\beq \label{dis}
	\begin{split}
	I(A:B_j)_{(\id \otimes \Lambda_j) (\rho)} \leq &\max_{\Gamma \in QC} 	I(A:B)_{(\id \otimes \Gamma) (\rho)} \\&+ (2 \epsilon' + 4\Delta) S(\gamma(E_A/\Delta)) \\&+ (1 + \epsilon') \,h\left({\epsilon' \over {1 + \epsilon'} }\right) + 2 h(\Delta)\ ,\end{split}
	\eeq
	where $\epsilon' = {\zeta \over \delta} $, $\Delta = \frac{1}{2}{\epsilon' \over 1 + \epsilon'}$, $\gamma(E)$ is the Gibbs state for system $A$ defined in Eq.~\eqref{gammamain}, and the maximum on the r.h.s.\ is over quantum-to-classical channels $\Gamma(X) \coloneqq \sum_l \Tr (N_l X) \proj{l}$,  with $\{N_l\}_l$ a POVM and  $\{\ket{l}\}_l$ a set of orthonormal states. As a consequence,
\beq \label{rs}\lim_{N \to \infty} \max_{\Lambda_{\!N\!}}\, \bbE_j I(A\!:\!B_j)_{(\id\!\otimes \Lambda_{\!N\!})(\rho)} = \max_{\Gamma \in QC} I(A\!:\!B)_{(\id \!\otimes \Gamma)(\rho)}\eeq
\end{corollary}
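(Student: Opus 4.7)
The plan is to derive Eq.~\eqref{dis} by combining Theorem~\ref{main_th_f} (applied to the channel $\Lambda_N$) with an energy-constrained continuity bound for the quantum mutual information, and then to deduce Eq.~\eqref{rs} from Eq.~\eqref{dis} by a Markov-type averaging. Since $\Tr[\rho_B H_B]\le E_B$, Theorem~\ref{main_th_f} produces a POVM $\{M_l\}_l$ and a set $S\subseteq\{1,\ldots,N\}$ with $|S|\ge(1-\delta)N$ such that, for every $j\in S$, there are states $\tau_{j,l}\in\mathcal{D}(B_j)$ with
\[
\|\Lambda_j-\cE_j\|_{\diamond H_B,E_B}\le \frac{\zeta}{\delta}\eqqcolon \epsilon',\qquad \cE_j(X)=\sum_l\Tr(M_l X)\,\tau_{j,l}.
\]
The key feature is that the POVM $\{M_l\}_l$ is common to all $j\in S$.

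Each measure-and-prepare channel factors as $\cE_j=\cD_j\circ\Gamma$, where $\Gamma(X)\coloneqq\sum_l\Tr(M_l X)\ketbra{l}$ is a single quantum-to-classical channel (independent of $j$) and $\cD_j(\ketbra{l})\coloneqq\tau_{j,l}$ is a preparation channel acting on the classical register. Monotonicity of the quantum mutual information under local CPTP maps, applied to $\cD_j$ on the $B$-side, therefore gives
\[
I(A\!:\!B_j)_{(\id\otimes\cE_j)(\rho)}\le I(A\!:\!B)_{(\id\otimes\Gamma)(\rho)}\le \max_{\Gamma'\in QC}I(A\!:\!B)_{(\id\otimes\Gamma')(\rho)}.
\]
By the diamond-norm estimate and $\Tr[\rho_B H_B]\le E_B$, the states $\sigma\coloneqq(\id\otimes\Lambda_j)(\rho)$ and $\omega\coloneqq(\id\otimes\cE_j)(\rho)$ satisfy $\|\sigma-\omega\|_1\le\epsilon'$, and both have marginal $\rho_A$ on $A$ with $\Tr[\rho_A H_A]\le E_A$. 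Shirokov's energy-constrained refinement of the Alicki--Fannes--Winter continuity inequality for the quantum mutual information then yields
\[
I(A\!:\!B_j)_{\sigma}\le I(A\!:\!B_j)_{\omega}+(2\epsilon'+4\Delta)S(\gamma(E_A/\Delta))+(1+\epsilon')h\!\left(\tfrac{\epsilon'}{1+\epsilon'}\right)+2h(\Delta),
\]
with $\Delta=\tfrac12\tfrac{\epsilon'}{1+\epsilon'}$. Chaining this with the data-processing estimate above delivers Eq.~\eqref{dis}.

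For Eq.~\eqref{rs}, the $\le$ direction is obtained by averaging Eq.~\eqref{dis} over $j$: the fraction $|S|/N\ge 1-\delta$ contributes the QC maximum plus the correction terms, while for $j\notin S$ we use the crude bound $I(A\!:\!B_j)\le 2S(\rho_A)\le 2S(\gamma(E_A))$ (finite by the Gibbs hypothesis), weighted by $\delta$. Letting $N\to\infty$ with $\delta=\delta(N)$ chosen so that $\delta(N)\to 0$ and $\zeta(N)/\delta(N)\to 0$ drives $\epsilon'\to 0$ and hence every correction term to zero. The matching $\ge$ direction is obtained by taking $\Lambda_N$ to be the optimal quantum-to-classical channel $\Gamma^{*}$ followed by classical broadcasting of the outcome register into $N$ copies; then $\Lambda_j=\Gamma^{*}$ for every $j$, so $\bbE_j I(A\!:\!B_j)_{(\id\otimes\Lambda_N)(\rho)}=\max_{\Gamma\in QC}I(A\!:\!B)_{(\id\otimes\Gamma)(\rho)}$ already for every finite $N$.

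The main obstacle will be the continuity step: in infinite dimensions, the classical $\log\dim$ coefficient of the finite-dimensional AFW inequality must be replaced by the energy-dependent quantity $S(\gamma(E_A/\Delta))$, which diverges as $\Delta\to 0$. Matching exactly the coefficients $(2\epsilon'+4\Delta)$, $(1+\epsilon')$ and $2h(\Delta)$ in Eq.~\eqref{dis} requires careful invocation of Shirokov's bound and verifying that in the limit $N\to\infty$ one can choose $\delta(N)$ slowly enough that $\delta(N)\to 0$ while still having $\epsilon'\,S(\gamma(E_A/\epsilon'))\to 0$ -- a competition between a power-law decay of $\epsilon'$ and the (sub-polynomial, for Gibbs-hypothesis Hamiltonians) growth of the entropy of the Gibbs state.
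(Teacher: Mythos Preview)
Your proposal is correct and follows essentially the same route as the paper: apply Theorem~\ref{main_th_f}, factor $\cE_j=\cD_j\circ\Gamma$ and use data processing, then invoke the energy-constrained continuity bound for the conditional entropy (the paper cites it as Winter's \cite[Lemma~17]{tightuniform} rather than Shirokov's, and converts it to a mutual-information bound via the observation that $\sigma$ and $\omega$ share the same $A$-marginal); for Eq.~\eqref{rs} the paper makes the explicit choice $\delta=\sqrt{\zeta}$ and uses $S(\gamma(E))=o(E)$ to kill the $\epsilon' S(\gamma(E_A/\Delta))$ term, which is exactly the competition you flag at the end.
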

\begin{proof}[Outline of the proof of Corollary~\ref{cor}] We follow the conceptual steps of the proof of~\cite[Corollary~4]{Brandao2015QD}, adapting them to our infinite-dimensional framework. In particular, our argument relies on a continuity bound for the conditional entropy of infinite-dimensional systems subjected to energy constraints~\cite[Lemma~17]{tightuniform}. We apply it to the states $\tau = (\id \otimes \Lambda_j)(\rho)$ and $\sigma= (\id \otimes \cE_j)(\rho)$, which are close in 1-norm by virtue of Theorem~\ref{main_th_f}. Since the reduced entropies on the $A$ subsystems are the same for $\tau$ and $\sigma$, the continuity bound for the conditional entropy holds true for the mutual information as well. We then obtain Eq.~\eqref{dis}. To prove Eq.~\eqref{rs}, we exploit Eq.~\eqref{dis} to show that the l.h.s.~is no larger that the r.h.s.; this concludes the proof, as the reverse (r.h.s.~no larger than l.h.s.) is trivial. The complete proof is given in~\ref{appendix:c}. \end{proof}\begin{remark}
The result of Corollary~\ref{cor} also applies to a Hamiltonian $H_B$ that takes the form~\eqref{H-assumption} with $f_j=j$, and therefore does not satisfy Eq.~\eqref{log-condizione}. In fact, the proof remains valid when the objectivity bound of Theorem~\ref{main_th_f} is replaced with the one given by~\eqref{newnbound}.\end{remark}
 As mentioned before, Eq.~\eqref{rs} implies that quantum discord can be interpreted as the minimal average loss in mutual information when one of the two parties asymptotically redistributes its share of correlations. In the framework of Quantum Darwinism this means that, when the number of environment fragments grows significantly, the correlations established between the (infinite-dimensional) system of interest $A$ and each of the observers (who in turn has access only to a fragment $B_j$ of the environment) can be at most classical.

\section{Conclusions and Outlook}

In this paper we  investigated the generic characteristics of the objectivity of observables arising in the quantum-to-classical transition within the premises of Quantum Darwinism. Going beyond recent studies for finite- and infinite-dimensional systems ~\cite{Brandao2015QD,emergenza}, we presented a unified approach to derive bounds on the emergence of such objectivity in quantum systems of arbitrary dimension, probed by multiple observers each accessing a fragment of the environment. In the particular case of a system-environment dynamics specified by a pure loss channel, we derived lower and upper bounds on the rate at which objectivity of observables emerges as a function of the number of environmental fragments. Furthermore, we proved that, even when the system under observation is infinite-dimensional, it cannot share quantum correlations with asymptotically many observers, as the maximum correlation each observer can establish with the system is, on average, of purely classical nature. This observation, which extends to the infinite-dimensional scenario an operational interpretation for quantum discord put forward in~\cite{Brandao2015QD}, can also be seen as a quantitative manifestation of the quantum-to-classical transition, seen exclusively from the balance of correlations, without having to analyse the system-environment interaction. 

The role of quantum discord in understanding the quantum-to-classical transition has also been recently investigated in Ref.~\cite{TK}. In particular, the authors showed the equivalence between the so-called {\em strong Quantum Darwinism} and {\em spectrum broadcasting}, another framework aiming for the modelling and interpretation of  ``objectivity"~\cite{Horodecki2015}. Exploring deeper connections between these studies and our results, with the aim to achieve an even more fundamental (and quantitative) understanding of the emergence of objectivity and classicality, is certainly an endeavour worthy of further investigation. Another fascinating perspective could be to study the applicability of our methods --- which are rooted in quantum information theory and related, e.g., to no-broadcasting and monogamy properties of genuinely quantum correlations --- to cosmological scenarios~\cite{ErBucodeZurek}, in order to cast new light on the black hole information paradox and related issues remaining unsolved at the quantum/classical/general-relativistic triple border.

\section*{Acknowledgments}
We acknowledge financial support from the European Research Council (ERC) under the Starting Grant GQCOP (Grant no.~637352) and the Foundational Questions Institute (FQXi) under the Intelligence in the Physical World Programme (Grant no.~FQXiRFP-IPW-1907). LL is supported by the University of Ulm. TT acknowledges support from the University of Nottingham via a Nottingham Research
Fellowship. We are grateful to Marco Piani and Paul Knott for insightful discussions.

\section*{References}

\bibliographystyle{iopart-num-g}
\bibliography{bibliografia}

\appendix 

\renewcommand*{\thetheorem}{A\arabic{theorem}}
\renewcommand*{\thetmain}{M\arabic{tmain}}
\renewcommand*{\theexample}{A\arabic{example}}
\renewcommand*{\thelemma}{A\arabic{lemma}}
\renewcommand*{\thecorollary}{A\arabic{corollary}}
\renewcommand*{\theremark}{A\arabic{remark}}
\renewcommand*{\theobservation}{A\arabic{observation}}
\renewcommand*{\theproposition}{A\arabic{proposition}}
\renewcommand*{\thedefinition}{A\arabic{definition}}

\section{$\boldsymbol{f}$-dependent objectivity bounds}
\label{appendix:a}

The proof of Theorem~\ref{main_th_f} involves a generalisation of the concept of Choi--Jamio\l kowski isomorphism, which relies on a class of infinite-dimensional entangled states depending on the underlying system's Hamiltonian. For clarity, we recall here the definition of modified Choi--Jamio\l kowski state associated to such class of states. 

\begin{manualdefinition}{\ref{defchoi}}[(Restatement)]
The \textit{modified Choi--Jamio\l kowski state} of a cptp map $\Lambda: \mathcal{D}(A') \rightarrow \mathcal{D}(B)$, for a given sequence of Hamiltonian eigenvalues $f=\{f_j\}_j$, is defined as
	\begin{equation}
	J_f(\Lambda) \coloneqq \id_A \otimes \Lambda_{A'} [ \ket{\phi}\bra{\phi} ],
	\end{equation}
	where the entangled state $\ket{\phi}$ reads
	\beq \tag{\ref{phi}}
	\ket{\phi} \coloneqq c_f \sum_{j=0}^{\infty} \phi_j \ket{j,j}_{AA'}\,,
	\eeq
	with $\phi_j^2 \coloneqq 1 / f_j$ and $c_f \coloneqq \left(\sum \frac{1}{f_j}\right)^{-\frac12}$. 
\end{manualdefinition}

We start by proving Lemma~\ref{trunc_1}, which bounds the distance between two $f$-Choi states as a function of $d$ (the truncation dimension), and Lemma~\ref{LJ}, which relates the distance between two channels to that between their $f$-Choi states. Our preparations are completed by the rather technical Lemma~\ref{cor:brandao1}: there we show that a crucial inequality exploited in Ref.~\cite{Brandao2015QD}, whose original formulation explicitly relies on finite-dimensional techniques, may be suitably modified to fit our infinite-dimensional scenario. To achieve the latter result, we exploit the assumption that the $f$-Choi states have finite local entropy.
Once all the above ingredients are in place, we present the proof of Theorem~\ref{main_th_f}. We conclude the section by presenting additional details on the calculations behind Eq.~\eqref{f_zeta}, obtained for the sequence $\{f_j\}$ which bridges the finite- and infinite-dimensional cases. \smallskip

Lemma~\ref{trunc_1} is a generalised and refined version of Lemma~S5 in the supplemental material of~\cite{emergenza}. The first property comes from our definition of modified Choi--Jamio{\l}kowski states, which relies on generic Hamiltonian eigenvalues $\{f_j\}$ (whilst in~\cite{emergenza} the latter take an exponential form). The second one arises from applying a result by Aubrun {\it et at.}~\cite[Corollary~9]{XOR}.
\begin{lemma}\label{trunc_1}
	Given $L = \tau - \sigma $, where $\tau=J_f(\Lambda_1)$ and $\sigma=J_f(\Lambda_2)$ are modified Choi--Jamio{\l}kowski states for the cptp maps $\Lambda_1$ and $\Lambda_2$, we have that
	\begin{align}
	\|L\|_1 \leq 4 d^\frac{3}{2} \max_{\cM} \| \id \otimes \cM[L] \|_1 + 4 \epsilon_d\ ,
	\end{align}
	where $\epsilon_d$ is given in Definition~\ref{deftail}, $d$ is the corresponding truncation dimension, and the maximum on the r.h.s.\  is over quantum-to-classical channels $\cM(Y) =
	\sum_l \Tr(N_l Y) \ket{l}\bra{l}$, with POVM $\{N_l\}_l$ and orthonormal states $\{\ket{l}\}_l$.
\end{lemma}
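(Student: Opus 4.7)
The plan is to split $L$ into a truncated part supported on the first $d$ levels of the reference system $A$, plus a tail whose $1$-norm is controlled by $\epsilon_d$, and then apply the finite-dimensional inequality of~\cite[Corollary~9]{XOR} (which compares the full trace norm to the one-sided measurement norm) to the truncated part. Concretely, I would introduce the projector $P \coloneqq \Pi_d \otimes \id_B$ acting on $A \otimes B$ as $\Pi_d$ on the reference $A$ and the identity on $B$, and define $L_d \coloneqq P L P$. Since $\id_A \otimes (\Lambda_1 - \Lambda_2)$ does not act on system $A$, the projector $\Pi_d$ can be pulled through it, so $L_d = \id_A \otimes (\Lambda_1 - \Lambda_2)[\ket{\phi^{\leq}}\bra{\phi^{\leq}}]$, where $\ket{\phi^{\leq}} \coloneqq (\Pi_d \otimes \id_{A'})\ket{\phi}$ is the sub-normalized truncation of the entangled vector $\ket{\phi}$.

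To estimate $\|L - L_d\|_1$, I would write $\ket{\phi} = \ket{\phi^{\leq}} + \ket{\phi^{>}}$ with $\|\ket{\phi^{>}}\| = \epsilon_d$ by Definition~\ref{deftail}, and expand the difference $\ket{\phi}\bra{\phi} - \ket{\phi^{\leq}}\bra{\phi^{\leq}}$ into the resulting cross and tail terms. The triangle inequality bounds the $1$-norm of that expansion by $2\epsilon_d + \epsilon_d^2$, and composing with $\id \otimes (\Lambda_1 - \Lambda_2)$, which is $1$-norm contractive on Hermitian inputs up to an overall factor of $2$, yields $\|L - L_d\|_1 \leq 4\epsilon_d$ in the regime where the bound is nontrivial (the quadratic $\epsilon_d^2$ piece being easily absorbed).

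Next, because $L_d$ is Hermitian and supported on a $d$-dimensional subspace of $A$, I would invoke~\cite[Corollary~9]{XOR} in the form $\|L_d\|_1 \leq 4 d^{3/2} \max_{\cM} \|\id \otimes \cM[L_d]\|_1$, the maximum running over quantum-to-classical channels $\cM$ on $B$. Since $P$ acts on $A$ while $\cM$ acts on $B$, the two commute, giving $\id \otimes \cM[L_d] = P\,(\id \otimes \cM[L])\,P$; and compression by an orthogonal projection can only decrease the $1$-norm, so $\max_{\cM} \|\id \otimes \cM[L_d]\|_1 \leq \max_{\cM} \|\id \otimes \cM[L]\|_1$. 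Assembling these estimates via $\|L\|_1 \leq \|L_d\|_1 + \|L - L_d\|_1$ will deliver the claimed inequality. The main nontrivial input will be the Aubrun \emph{et al.} result, which supplies the dimension-dependent equivalence between the trace norm and the one-sided measurement norm with the favourable $d^{3/2}$ scaling; everything else reduces to bookkeeping of the truncation tail together with the contractivity of CPTP maps and orthogonal projections under the $1$-norm.
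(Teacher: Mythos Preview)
Your approach is essentially the paper's: split $L = L_d + (L - L_d)$ with $L_d = (\Pi_d\otimes\id)L(\Pi_d\otimes\id)$, apply \cite[Corollary~9]{XOR} to $L_d$, and then use $\|(\Pi_d\otimes\id)Y(\Pi_d\otimes\id)\|_1 \leq \|Y\|_1$ (which the paper phrases as pinching/data processing) to replace $L_d$ by $L$ inside the measurement norm.

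The one slip is in the tail estimate. The crude triangle inequality on $\ketbra{\phi} - \ketbra{\phi^{\leq}}$ yields at best $2\epsilon_d\sqrt{1-\epsilon_d^2} + \epsilon_d^2$ (or your looser $2\epsilon_d + \epsilon_d^2$), and after the factor $2$ coming from $\Lambda_1-\Lambda_2$ you obtain $4\epsilon_d + 2\epsilon_d^2$, which is \emph{strictly larger} than $4\epsilon_d$ for every $\epsilon_d>0$; the quadratic piece cannot simply be ``absorbed''. To recover the stated constant you need the sharper bound $\|\ketbra{\phi} - \ketbra{\phi^{\leq}}\|_1 \leq 2\epsilon_d$. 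This is exactly what the paper invokes (via \cite[Proposition~S2]{emergenza}), and it follows at once from the gentle-measurement lemma applied to the pure state $\ketbra{\phi}$ and the projector $\Pi_d\otimes\id$, since $\Tr[(\Pi_d\otimes\id)\ketbra{\phi}] = 1-\epsilon_d^2$; alternatively a direct $2\times 2$ eigenvalue computation in the span of $\ket{\phi^{\leq}},\ket{\phi^{>}}$ gives the exact value $\epsilon_d\sqrt{4-3\epsilon_d^2}\leq 2\epsilon_d$. With that correction your argument is complete and coincides with the paper's.
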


\begin{proof}
By writing $L$ in the form $L = \sum_{ij =0}^{\infty} \ket{i}\bra{j} \otimes L_{ij}$ we have that
	\begin{align}
	\|L\|_1 &\overset{1}{\leq}	\| (\Pi_d \otimes \id) [L] \|_1 +
	\| ((\id - \Pi_d) \otimes \id) [L] \|_1  \\ \nonumber
	&= \| (\Pi_d \otimes \id) [L] \|_1 +
	\left\|\sum_{\min\{i, j\} \geq d} \ket{i}\bra{j}\otimes L_{ij} \right\|_1  \\ \nonumber
	&\overset{2}{\leq} 4 d^\frac{3}{2} \max_{\cM} \left\| (\Pi_{d} \otimes \cM) [L] \right\|_1 + \left\|\sum_{\min\{i, j\} \geq d} \ket{i}\bra{j}\otimes L_{ij} \right\|_1 \\ \nonumber
	&\overset{3}{\leq} 4 d^\frac{3}{2} \max_{\cM} \left\| (\id \otimes \cM) [L] \right\|_1 + \left\|\sum_{\min\{i, j\} \geq d} \ket{i}\bra{j}\otimes L_{ij} \right\|_1 .
	\end{align}
Note that in~1 we used the triangle inequality. In~2, instead, we applied a result by Aubrun {\it et al.}~\cite[Corollary~9]{XOR}: in fact, for an arbitrary bipartite operator $Z$, it holds that
\begin{equation*}
\max_{\cM} \left\| (I \otimes \cM) [Z] \right\|_1 = \|Z\|_{\mathrm{LOCC}_\leftarrow} \geq \|Z\|_{\mathrm{LO}}\, ,
\end{equation*}
where the maximization on the l.h.s.\ is as usual over local measurements, while the quantities (a)~$\|\cdot\|_{\mathrm{LOCC}_\leftarrow}$ and (b)~$\|\cdot\|_{\mathrm{LO}}$ are the distinguishability norms~\cite{VV-dh,ultimate} associated with the sets of (a)~local operations assisted by classical communication from the second system to the first; and (b)~local operations alone. By~\cite[Eq.~(40)]{XOR}, it holds that $\|Z\|_{\mathrm{LO}}\geq \frac{1}{4n^{3/2}}\|Z\|_1$, where $n$ denotes the smaller of the local dimensions. The inequality in~2 is just an application of this, with $Z\coloneqq (\Pi_d \otimes \id) [L]$ and hence $n\leq d$.
Furthermore, in~3 we applied the pinching theorem~\cite[Eq.~(IV.52)]{BHATIA-MATRIX}, or, alternatively, the data processing inequality for the trace distance -- note that $X\mapsto \Pi X\Pi + (\id-\Pi)X (\id-\Pi)$ is a cptp map for every projector $\Pi$. Finally, multiple applications of the triangle inequality yield
	\begin{align*}
	\left\|\sum_{\min\{i,j\} \geq d} \ket{i}\bra{j} \otimes L_{ij} \right\|_1
	&= \left\|L - (\Pi_{d} \otimes \id) L (\Pi_{d} \otimes \id) \right\|_1 \\
	&= \left\|(\tau-\sigma) - (\Pi_{d} \otimes \id) (\tau-\sigma) (\Pi_{d} \otimes \id) \right\|_1 \\
	&= \left\|\tau - \tau_{d} - ( \tau - \sigma_{d} ) \right\|_1 \nonumber \\
	&\leq \left\|\tau - \tau_{d} \right\|_1 + \left\| \sigma - \sigma_{d} \right\|_1 
	\end{align*}
	where $\tau_{d}\coloneqq (\Pi_d \otimes \id) \tau (\Pi_d \otimes \id)$ and $\sigma_d\coloneqq (\Pi_d \otimes \id) \sigma (\Pi_d \otimes \id)$.
The $1-$norms on the r.h.s.\  can be bounded from above by exploiting the result of Proposition~S2 in the supplemental material of \cite{emergenza}, suitably adapted to our  modified Choi--Jamio{\l}kowski states. In particular, by replacing the coefficients $\phi_j = e^{-\frac{\omega j}{2}}$ in
\cite[Proposition~S2]{emergenza} with our $\phi_j = {f_j}^{-{1\over 2}}$ we have that, for $\rho=J_f(\Lambda)$,
	\begin{equation}
	\|\rho - \rho_d \|_1 \leq 2\epsilon_d\ ,
	\end{equation} where $\rho_d \coloneqq (\Pi_d \otimes \id) \rho (\Pi_d \otimes \id)$. We then obtain  
		\begin{align*}
	\left\|\sum_{\min\{i,j\} \geq d} \ket{i}\bra{j} \otimes L_{ij} \right\|_1
\leq 4\epsilon_d\ .\nonumber
	\end{align*}
\end{proof}

Before proceeding with the proof, we restate for clarity the definition of energy-constrained diamond norm (see Definition~\ref{diamondH} in the main text) for the specific case of a Hamiltonian which satisfies the Gibbs hypothesis and is written as in Eq.~\eqref{H-assumption}.

\begin{definition} \label{diamondH-f}
Let $A'$ be a quantum system equipped with a Hamiltonian $H_{A'}$ satisfying the Gibbs hypothesis and written as in Eq.~\eqref{H-assumption}, and pick $E > E_0 = f_0$. Then the energy-constrained diamond norm of an arbitrary Hermiticity-preserving linear map $\Lambda : \mathcal{D}(A') \rightarrow \mathcal{D}(B)$ is defined by
	\beq \label{diamondapp}
	\|\Lambda\|_{\Diamond H,E}\coloneqq
	\sup_{
		\substack{
			\sum_j f_j \braket{j|\rho_{A'}|j} \leq E
		}
	} \left\| (\id_A \otimes \Lambda_{A'})(\rho_{AA'})\right\|_1 ,
	\eeq
where $A$ is an arbitrary ancillary system, and $ \| \cdot \|_1$ is the one-norm. A recent result by Weis and Shirokov~\cite{Weis-Shirokov} ensures that the input state $\rho_{AA'}$ in Eq.~\eqref{diamondapp} can be taken to be pure.
\end{definition}

\begin{lemma}[(Generalisation of Lemma S6 in the supplemental material of~\cite{emergenza} for a Hamiltonian given by Eq.~\eqref{H-assumption} and Eq.~\eqref{log-condizione})] \label{LJ}
For cptp maps $ \Lambda_0 $ and $ \Lambda_1$ whose input system is equipped with a Hamiltonian $H$ which satisfies the Gibbs hypothesis, takes the form as in Eq.~\eqref{H-assumption} and satisfies Eq.~\eqref{log-condizione}, we have that
\beq
\| \Lambda_0 - \Lambda_1 \|_{\Diamond H,E} \leq \frac{E}{c_f^2} \| J_f(\Lambda_0) - J_f(\Lambda_1) \|_1\ ,
\eeq
where the modified Choi--Jamio\l kowski state $J_f(\Lambda)$ of $\Lambda$ is constructed as in Definition~\ref{defchoi}.
\end{lemma}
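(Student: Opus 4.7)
The plan is to unpack the definition of the energy-constrained diamond norm and reduce the statement to a simple operator manipulation on the level of pure input states. By the Weis--Shirokov result invoked in Definition~\ref{diamondH-f}, it suffices to consider pure inputs $\rho_{AA'}=\ketbra{\psi}_{AA'}$ with $\sum_j f_j \braket{j|\rho_{A'}|j} \le E$. Expanding $\ket{\psi}=\sum_{i,j} c_{ij}\ket{i}_A\ket{j}_{A'}$ in the Hamiltonian eigenbasis on the $A'$ side, I would absorb the arbitrariness of $\ket\psi$ into the ancilla by writing $\ket\psi = (X\otimes I_{A'})\ket{\phi}_{AA'}$, where
\beqq
X \coloneqq \sum_{i,j} \frac{c_{ij}}{c_f\,\phi_j}\, \ketbraa{i}{j}\ .
\eeqq
A direct term-by-term check shows that this does give back $\ket\psi$, and the construction is well-posed because $\phi_j = 1/\sqrt{f_j}>0$ for every $j$ (by Eq.~\eqref{assumption}).

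The next step is to control the operator norm of $X$ via its Hilbert--Schmidt norm, using the energy constraint:
\beqq
\|X\|_2^2 = \Tr[X X^\dagger] = \sum_{i,j}\frac{|c_{ij}|^2}{c_f^2\,\phi_j^2} = \frac{1}{c_f^2}\sum_j f_j \braket{j|\rho_{A'}|j} \le \frac{E}{c_f^2}\ .
\eeqq
Consequently $X$ is Hilbert--Schmidt and hence bounded, with $\|X\|_\infty \le \|X\|_2 \le \sqrt{E}/c_f$. This is the crucial estimate that converts the energy constraint on the input state into an operator-norm bound on the ``Choi-conjugating'' operator.

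With $X$ in hand, the conclusion follows from the chain
\begin{align*}
\|(\id \otimes (\Lambda_0 - \Lambda_1))(\ketbra{\psi})\|_1
&= \left\|(X\otimes I)\bigl[J_f(\Lambda_0) - J_f(\Lambda_1)\bigr](X^\dagger\otimes I)\right\|_1 \\
&\le \|X\|_\infty^2 \,\|J_f(\Lambda_0) - J_f(\Lambda_1)\|_1 \\
&\le \frac{E}{c_f^2}\,\|J_f(\Lambda_0) - J_f(\Lambda_1)\|_1\ ,
\end{align*}
where the equality uses that $X\otimes I$ commutes with $\id \otimes \Lambda_k$, and the first inequality is the standard estimate $\|ZMZ^\dagger\|_1\le \|Z\|_\infty^2 \|M\|_1$ valid for bounded $Z$ and trace-class $M$. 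Taking the supremum over admissible $\ket\psi$ then yields the claim.

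The main obstacle, compared to the finite-dimensional analogue, is ensuring that every object appearing in the manipulation is a well-defined bounded (resp.~trace-class) operator on an infinite-dimensional Hilbert space. The energy constraint together with $c_f<\infty$ (guaranteed by Eq.~\eqref{log-condizione}) is precisely what makes $X$ bounded; once this is secured, the rest is a direct adaptation of the exponential-$f_j$ argument used in the proof of Lemma~S6 in the supplemental material of~\cite{emergenza}.
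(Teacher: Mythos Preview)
Your argument is correct and is precisely the adaptation of \cite[Lemma~S6]{emergenza} that the paper's own proof refers to: write the energy-constrained pure input as $(X\otimes I)\ket{\phi}$, bound $\|X\|_\infty$ via the Hilbert--Schmidt norm using the energy constraint, and conclude by H\"older. The paper does not spell out these details but simply points to this same construction with the modified Choi state of Definition~\ref{defchoi} substituted in.
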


\begin{proof}
Lemma~\ref{LJ} can be proved by adapting the argument in the proof of \cite[Lemma~S6]{emergenza} to our choice of the imput system's Hamiltonian, i.e., by replacing the definition of modified Choi--Jamio\l kowski states used there with the one given in Definition~\ref{defchoi}.
\end{proof}

\begin{lemma}[(Adapted from Eq.~(16) in the supplementary notes of~\cite{Brandao2015QD})] \label{cor:brandao1}
	Let $\Lambda$ be a cptp map, and let the corresponding modified Choi--Jamio\l kowski state given by Definition~\ref{defchoi} be denoted with $\rho_{AB_1...B_N}\coloneqq \id_{A} \otimes \Lambda_{A'} (\ket{\phi}\bra{\phi})$, where $\ket{\phi}$ is given in Eq.~\eqref{phi}.
	Fix an integer $m\leq N$. Then there exists a set of indices $J\coloneqq \left(j_1,\ldots, j_{q-1}\right)$, where $q\le m$, and quantum-to-classical channels ${\cal M}_{j_1},\ldots, {\cal M}_{j_{q-1}}$ such that 
	\beq \label{step1}
	\bbE_{j \notin J} \max_{{\cal M}_j} \left\| \id \otimes {\cal M}_j \left[ \rho_{AB_j} - \bbE_z \rho_A^z \otimes \rho_{B_j}^z \right] \right\|_1 \leq \sqrt{2 \ln(2) \sigma \over m},
	\eeq
	where: $\sigma$ is given in~Eq.\eqref{entropy}; the expectation value is with respect to the uniform distribution over $\{1,\ldots, N\}\setminus J$; the maximum runs over all quantum-to-classical channels; $z$ is a random variable that represents the outcome of the measurements ${\cal M}_{j_1},\ldots, {\cal M}_{j_{q-1}}$ on $\rho_{AB_1\ldots B_N}$; and $\rho_A^z$, $\rho_{B_j}^z$ are the corresponding post-measurement states.
	\end{lemma}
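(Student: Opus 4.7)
The plan is to follow closely the finite-dimensional strategy of~\cite{Brandao2015QD} (leading to Eq.~(16) in their supplementary notes) while carefully verifying that every step survives the passage to infinite dimensions. The crucial enabling ingredient is the finiteness of $\sigma = S\bigl(\Tr_{A'}\ketbra{\phi}\bigr)$, guaranteed by the assumption~\eqref{log-condizione}, which plays the role of $\log|A|$ in the finite-dimensional proof.

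My first move would be to reduce the maximised LOCC-type norm on the left-hand side of~\eqref{step1} to a quantum mutual information via Pinsker's inequality. Writing $\rho_{AB_j} = \bbE_z \rho_{AB_j}^z$, where $\rho^z$ denotes the post-measurement state conditioned on the outcomes $z$ of $\mathcal{M}_{j_1},\ldots, \mathcal{M}_{j_{q-1}}$, the triangle inequality pulls $\bbE_z$ outside the one-norm. For each fixed $z$ and $\mathcal{M}_j$, the classical--quantum Pinsker inequality applied to $(\id\otimes \mathcal{M}_j)[\rho_{AB_j}^z]$, combined with data processing under $\mathcal{M}_j$, bounds each summand by $\sqrt{2\ln(2)\,I(A:B_j)_{\rho^z}}$, uniformly in $\mathcal{M}_j$. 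Two successive applications of Jensen's inequality (concavity of $\sqrt{\cdot}$) pull both the $z$-expectation and the $j$-average inside the square root, yielding
\[ \bbE_{j\notin J}\max_{\mathcal{M}_j}\bigl\| \id \otimes \mathcal{M}_j [\rho_{AB_j} - \bbE_z \rho_A^z\otimes\rho_{B_j}^z] \bigr\|_1 \leq \sqrt{2\ln(2)\, \bbE_{j\notin J} I(A:B_j \,|\, Z_J)_\rho}. \]
The problem thus reduces to selecting $J$ and the measurements $\mathcal{M}_{j_i}$ so that $\bbE_{j \notin J} I(A:B_j\,|\,Z_J)_\rho \leq \sigma/m$.

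I would then construct $J$ greedily. Starting with $J_0 = \emptyset$, at step $i$ I would consider the residual average $\alpha_i := \bbE_{j\notin J_{i-1}} I(A:B_j\,|\,Z_{J_{i-1}})_\rho$; if $\alpha_i \leq \sigma/m$ I stop and set $q := i$, otherwise I pick an index $j_i \notin J_{i-1}$ together with a sufficiently fine measurement $\mathcal{M}_{j_i}$ (for instance a tensor product of informationally complete POVMs, as in~\cite{Brandao2015QD}) that extracts classical mutual information $I(A:Z_{j_i}\,|\,Z_{<i})_\rho$ bounded below by a fixed fraction of $\alpha_i$, and in particular by $\sigma/m$. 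The classical chain rule, together with the elementary bound $I(A:Z_J)_\rho \leq S(\rho_A) = \sigma$ valid for any classical random variable $Z_J$, then yields
\[ \sigma \geq \sum_{i=1}^{q-1} I(A:Z_{j_i}\,|\,Z_{<i})_\rho , \]
forcing the procedure to terminate with some $q \leq m$, and completing the proof once combined with the estimate of the previous paragraph.

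The principal obstacle is checking that the chain rule and the Pinsker step retain their strength in the infinite-dimensional setting, where a priori quantum mutual informations can diverge. This is precisely where~\eqref{log-condizione} enters the story: the elementary inequality $I(A:Z)\leq S(\rho_A)=\sigma$ for classical $Z$ remains finite regardless of the cardinality of the outcome alphabet, and all conditional mutual informations in sight are therefore well-defined and additive. A secondary subtlety is the quantitative link between the classical mutual information $I(A:Z_{j_i}\,|\,Z_{<i})$ actually controlled by the chain rule and the quantum quantity $I(A:B_{j_i}\,|\,Z_{<i})$ appearing in the Pinsker-based estimate; this is handled by taking the $\mathcal{M}_{j_i}$ informationally complete, at the cost of an absolute constant that can be absorbed into the $\sqrt{2\ln 2}$ prefactor of the final bound.
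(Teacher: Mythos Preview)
Your overall architecture --- Pinsker, greedy selection of indices, termination via the chain rule together with $I(A:Z_J)\le S(\rho_A)=\sigma$ --- is precisely what the paper invokes (it simply cites Eq.~(16) of~\cite{Brandao2015QD} and replaces $\log d_A$ by $\sigma$). However, the greedy step as you describe it contains a genuine gap.

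The problem is the detour through the \emph{quantum} conditional mutual information $I(A:B_j|Z_J)$. Once the average quantum CMI exceeds $\sigma/m$, you need a measurement $\mathcal{M}_{j_i}$ on $B_{j_i}$ whose \emph{classical} outcome satisfies $I(A:Z_{j_i}|Z_{<i})\ge c\,\sigma/m$. You appeal to informationally complete POVMs, but (i) the $B_j$ here are allowed to be infinite-dimensional, where no such extraction bound is available; (ii) even in finite dimension the relevant constant depends on $\dim B_j$ and is not absolute; (iii) any $c<1$ yields $q\le m/c$ rather than $q\le m$, producing $\sqrt{2\ln(2)\,\sigma/(cm)}$ instead of the stated bound. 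The constant therefore cannot be ``absorbed'' while preserving~\eqref{step1} as written.

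The repair --- and what~\cite{Brandao2015QD} actually does --- is to skip the data-processing step altogether. For each $j\notin J$ let $\widetilde{\mathcal{M}}_j$ be an (approximate) maximiser of the norm on the left of~\eqref{step1}; Pinsker applied only to this particular measurement gives directly
\[
\max_{\mathcal{M}_j}\bigl\|(\id\otimes\mathcal{M}_j)[\,\cdot\,]\bigr\|_1
\le \sqrt{2\ln(2)\,I(A:\widetilde Y_j|Z_J)}\,,
\]
with $\widetilde Y_j$ the classical outcome of $\widetilde{\mathcal{M}}_j$. The greedy test is then run on the \emph{classical} quantity $\bbE_{j\notin J}I(A:\widetilde Y_j|Z_J)$: if it exceeds $\sigma/m$, some index $j_i$ already has $I(A:\widetilde Y_{j_i}|Z_{<i})>\sigma/m$, and one sets $\mathcal{M}_{j_i}\coloneqq\widetilde{\mathcal{M}}_{j_i}$. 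No extraction loss occurs, the chain rule terminates at $q\le m$, and the stated constant survives.
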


\begin{proof}
It suffices to adapt the derivation of Eq.$(16)$ in the supplementary notes of~\cite{Brandao2015QD} to our infinite-dimensional scenario: the Choi--Jamio\l kowski state of $\Lambda$ is replaced with the $f-$Choi state of Definition~\ref{defchoi}, and the entropy $\log d_A$ with $S(\rho_A)$. Since $\Lambda$ is trace preserving, $\rho_A = \Tr_{A'}[\ketbra{\phi}_{AA'}]$, and $S(\rho_A)=\sigma$ by definition of $\sigma$.
\end{proof}


\begin{manualtmain}{\ref{main_th_f}}[(Restatement)] Let $A$ be a quantum system equipped with a Hamiltonian $H_A$  which satisfies the Gibbs hypothesis and which, when written as in Eq.~\eqref{H-assumption}, also satisfies Eq.~\eqref{log-condizione}. Consider an arbitrary cptp map $\Lambda  : \mathcal{D}(A) \rightarrow \mathcal{D}(B_1 \otimes \ldots \otimes B_N)$, and define the effective dynamics from $\mathcal{D}(A)$ to $\mathcal{D}(B_j)$ as $\Lambda_j \coloneqq \Tr_{ B\backslash B_j} \circ \Lambda$. For an arbitrary number $0< \delta <1$, there exists a POVM $\{M_l\}_l$ and a set $S \subseteq \{1,...,N\}$, with $|S| \geq (1-\delta)N$, such that, for all $j \in S$ and for any integer truncation dimension $d\geq 0$, we have that
\beq
\|\Lambda_j - \cE_j  \|_{\diamond H,E} \leq \frac{\zeta}{\delta} ,
\tag{\ref{almost m&p}}
\eeq
where the measure-and-prepare channel $\cE_j$ is given by
\beq \tag{\ref{m&p channel}}
\cE_j(X) \coloneqq  \sum_l   \Tr (M_l X) \tau_{j,l}
\eeq
for some family of states $\tau_{j,l} \in \mathcal{D}(B_j)$, and
\beq \tag{\ref{zeta_main}}
\zeta = \kappa d \left( \frac{ E^2 \sigma}{N c_f^4} \right)^{1/3} + \frac{4E}{c_f^2}\, \epsilon_d ,
\eeq
where $c_f$ is the normalization factor introduced in Eq.~\eqref{phi}; $\epsilon_d$ is given in Definition~\ref{deftail}; $\sigma$ is defined by Eq.~\eqref{entropy} and $\kappa\coloneqq 3\left(16\ln(2)\right)^{1/3}$ is a universal constant.
\end{manualtmain}

\begin{proof}
	As we will see below, the states $\rho_{AB_j}$ and $\bbE_z \rho_A^z \otimes \rho_{B_j}^z$ defined in Lemma~\ref{cor:brandao1} are modified Choi--Jamio\l kowski states. By applying Lemma~\ref{trunc_1} to them we have that
	\begin{align}\label{step2}
	\left\|\rho_{AB_j} - \bbE_z \rho_A^z \otimes \rho_{B_j}^z\right\|_1 \leq 4 d^\frac{3}{2} \max_{{\cal M}_j} \left\| \id_A \otimes {\cal M}_j \left[ \rho_{AB_j} - \bbE_z \rho_A^z \otimes \rho_{B_j}^z \right] \right\|_1 + 4\epsilon_d.
	\end{align}
	By combining Eq.~\eqref{step2} with Lemma~\ref{cor:brandao1} we then obtain
	\begin{equation}
	\begin{aligned}
	\bbE_{j \notin J} \left\|\rho_{AB_j} - \bbE_z \rho_A^z \otimes \rho_{B_j}^z \right\|_1	&\leq 4 d^\frac{3}{2} \bbE_{j \notin J} \max_{{\cal M}_j} \left\| \id_A \otimes {\cal M}_j \left[ \rho_{AB_j} - \bbE_z \rho_A^z \otimes \rho_{B_j}^z \right] \right\|_1 + 4\epsilon_d \\
	&\leq 4 d^\frac{3}{2} \sqrt{2 \ln(2) {\sigma} \over m} + 4\epsilon_d . 
	\end{aligned}
	\label{18-l}
	\end{equation}

We now show that $\bbE_z \rho_A^z \otimes \rho_{B_j}^z$ is the modified Choi--Jamio\l kowski state of a quantum-to-classical channel, explicitly given by
	\beq\label{mp1}
	\cE_j(X) \coloneqq c_f^{-2} \bbE_z \Tr \left[(\rho_A^z)^\intercal H^{1\over2} X H^{1\over2}\right] \rho_{B_j}^z .
	\eeq
In fact, 
\begin{align*}
    \left(\id_A\otimes \cE_j \right)\left(\ket{\phi}\bra{\phi}\right) &= c_f^{2} \sum_{j,k} \frac{1}{{f_j}^{1\over2}}\frac{1}{{f_k}^{1\over2}}\ket{j}\bra{k}\otimes \cE_j\left( \ket{j}\bra{k} \right) \\
    &= \bbE_z \sum_{j,k} \braket{j|\rho_A^z|k} \ket{j}\bra{k} \otimes \rho_{B_j}^z \\
    &= \bbE_z \rho_A^z\otimes \rho_{B_j}^z .
\end{align*}
Note that the measurement appearing in Eq.~\eqref{mp1} is independent of $j\notin J$. In fact, calling $N^z_{B_{j_1}\ldots B_{j_{q-1}}}$ the POVM element corresponding to the outcome $z$ of the measurement ${\cal M}_{j_1}\otimes \ldots \otimes {\cal M}_{j_{q-1}}$, the POVM appearing in Eq.~\eqref{mp1} can be expressed as $\left\{c_f^{-2} p(z)\, H^{1\over2} (\rho_A^z)^\intercal H^{1\over2} \right\}_z$, where $p(z) = \Tr\left[\rho_{AB_1\ldots B_N} N^z_{B_{j_1}\ldots B_{j_{q-1}}} \right]$. Now the claim follows because
\begin{equation}
    p(z) \left( \rho_A^z \right)^\intercal = \Tr_{B_1\ldots B_N}\left[ \rho_{AB_1\ldots B_N} N^z_{B_{j_1}\ldots B_{j_{q-1}}} \right]
\end{equation}
is independent of $j\notin J$.

\noindent
Since $\rho_{AB_j}$ is, by definition, the modified Choi--Jamio\l kowski state of $\Lambda_j$, from Lemma~\ref{LJ} it follows that
	\beq\label{20-l}
	\left\| \Lambda_j - \cE_j  \right\|_{\diamond H,E} \leq \frac{E}{c_f^2} \left\| \rho_{AB_j} - \bbE_z \rho_A^z \otimes \rho_{B_j}^z \right\|_1.
	\eeq
This, combined with Eq.~\eqref{18-l}, gives
	\begin{align*}
	\bbE_{j \notin J} \| \Lambda_j - \cE_j  \|_{\diamond H,E} &\leq \frac{E}{c_f^2} \bbE_{j \notin J} \| \rho_{AB_j} - \bbE_z \rho_A^z \otimes \rho_{B_j}^z \|_1 \\
	&\leq \frac{E}{c_f^2} \left( 4 d^\frac{3}{2} \sqrt{2 \ln(2) {\sigma} \over m} + 4\epsilon_d \right) \\
	&= \sqrt{\frac{32\ln(2) E^2 d^3 \sigma}{m c_f^4}} + \frac{4E}{c_f^2} \epsilon_d\ .
	\end{align*}
From the previous result we then find that
	\begin{align*}
	\bbE_j \| \Lambda_j - \cE  \|_{\diamond H,E} &\leq \bbE_{j \notin J} \| \Lambda_j - \cE_j  \|_{\diamond f} + {m \over N} \bbE_{j \in J} \| \Lambda_j - \cE_j  \|_{\diamond f} \\
	&\leq \sqrt{\frac{32\ln(2) E^2 d^3 \sigma}{m c_f^4}} + \frac{4E}{c_f^2} \epsilon_d + {2m \over N}.
	\end{align*}
	The right-hand-side, minimised with respect to $m$, gives the quantity
	\begin{align}\label{exp_zeta}
	\zeta =\kappa d\left( \frac{E^2\sigma}{N c_f^2} \right)^{1/3} + \frac{4E}{c_f^2} \epsilon_d,
	\end{align}
	where $\kappa= 3\left(16\ln(2)\right)^{1/3}$. To complete the proof, we apply Markov's inequality: $P \left(X\ge a \right) \le \frac{\bbE(X)}{a}$, where $X$ is a non-negative random variable, $\bbE(X)$ its expectation value, and $a >0$. In our case, $X =\left\| \Lambda_j - \cE  \right\|_{\diamond H,E}$, with $j$ being uniformly distributed, and $a=\frac{\zeta}{\delta}$, which leads us to
    \beq
    \text{P}\left(\|\Lambda_j - \cE_j  \|_{\diamond H,E} \ge \frac{\zeta}{\delta} \right) \le \delta ,
    \eeq
    completing the proof.
\end{proof}

\begin{example}[Case study: bridging finite and infinite dimensions] \label{bridge}
We calculate the quantity given by Eq.~\eqref{zeta_main} for the sequence of Hamiltonian eigenvalues 
\begin{equation}
f_j =\left\{\begin{array}{ll}
1& j\leq D-1\ ,\\
{e^{\omega j}\over 1-e^{-\omega}}& j\geq D\ .
\end{array}\right.
\end{equation}
We have that
\begin{align}
c_f&=\left(D+e^{-\omega D}\right)^{-\frac{1}{2}},\\
\epsilon_d&=\left({e^{-\omega d}\over D+e^{-\omega D}}\right)^{\frac{1}{2}}=e^{-\omega d/2}c_f,\\
s&\coloneqq \ln(2) \sigma=\ln(D+e^{-\omega D})+{\omega(1-D+De^{\omega})\over (e^\omega-1)(D e^{\omega D}+1)}-{\ln(1-e^{-\omega})\over(1+D e^{\omega D})},
\end{align}
where we assumed $d\geq D$. We then obtain \begin{align}
\zeta = \left( { 432 E^2(D+e^{-\omega D})^2 d^3 s \over N} \right)^{\!\!\frac13}\!\! + 4E \sqrt{\frac{D+e^{-\omega D}}{e^{\omega d}}},
\end{align}
which is valid for any $d\geq D$.
\end{example}

\section{Properties of the pure loss channel}
\label{appendix:b}
\subsection{Symmetry of the reduced dynamics}\label{appendix:b1}

We show that, when the dynamics from system $A$ to the environment fragments $B_1,...,B_N$ is given by
\beq \label{achannel}
\Lambda_{A\rightarrow B_1,...,B_N}(\cdot) \coloneqq U\left((\cdot)_A \otimes  \bigotimes_{j=2}^N|0\rangle \langle 0|_{B_j} \right)U^\dagger,
\eeq
with $U$ the symplectic unitary implementing a $N-$splitter,
the reduced dynamics $\Lambda_j = \Tr_{B\backslash B_j}\circ\Lambda$ have the same form for all $j$. We start by introducing the Weyl displacement operator:
 \beq 
 D(\vec{\alpha})\coloneqq\exp\left[\sum_j\left(\alpha_j a^\dagger_j - \alpha^*_j a_j\right)\right], \eeq
 where $\vec{\alpha}$ denotes a complex vector in $\mathbb{C}^N$, with $N$ the number of modes. A quantum state $\rho$ can be described in terms of the characteristic function
 \beq\chi_{\rho}(\vec{\alpha})\coloneqq\Tr\left[\rho D(\vec{\alpha})\right],\eeq
by means of which the state $\rho$ can be reconstructed as
 \beq\rho = \int \frac{d^{2N}\alpha}{\pi^N}\chi_{\rho}(\vec{\alpha}) D(-\vec{\alpha}).\eeq
 We will describe the channel in Eq.~\eqref{achannel} as the unitary operation on $\mathcal{D}(A \otimes B_2 \otimes \ldots \otimes B_N) $ given by
  \beq
 \rho_{\mathrm{in}} \rightarrow \rho_{\mathrm{out}} = U \rho_{\mathrm{in}} U^\dagger,
 \eeq
 with
 \beq
 \rho_{\mathrm{in}}=\rho_A \otimes  \bigotimes_{j=2}^N|0\rangle \langle 0|_{B_j}\ .
 \eeq
 The characteristic function for the input state is given by 
  \beq\chi_{\rho_{\mathrm{in}}}(\vec{\alpha})=\Tr\left[\rho_{\mathrm{in}} D(\vec{\alpha})\right]=\chi_{\rho_A}(\alpha_1)\chi_{|0\rangle \langle 0|}(\alpha_2)...\chi_{|0\rangle \langle 0|}(\alpha_N)=\chi_{\rho_A}(\alpha_1) \exp\left[{-\frac12\left(\|\vec{\alpha}\|^2 - |\alpha_1^2|\right)}\right]\eeq
 and for the output state we have 
  
\beq\chi_{\rho_{\mathrm{out}}}(\vec{\alpha})=\Tr\left[\rho_{\mathrm{out}} D(\vec{\alpha})\right]=\Tr\left[U\rho_{\mathrm{in}}U^\dagger D(\vec{\alpha})\right]=\Tr\left[\rho_{\mathrm{in}}U^\dagger D(\vec{\alpha})U\right].\eeq
Since $U^\dagger D(\vec{\alpha})U=D(V^\dagger\vec{\alpha})$, we obtain that
\beq\chi_{\rho_{\mathrm{out}}}(\vec{\alpha})=\Tr\left[\rho_{\mathrm{in}}D(V^\dagger\vec{\alpha})\right]=\chi_{\rho_A}\left(\frac{\alpha_1+\alpha_2+...+\alpha_N}{\sqrt{N}}\right) \exp\left[{-\frac12\left(\sumno_j|\alpha_j|^2 -\frac1N \left|\sumno_j\alpha_j\right|^2\right)}\right].\eeq
The characteristic function of the output state $\rho_{\mathrm{out}_j} = \Tr_{ B\backslash B_j} [\rho_{\mathrm{out}}]$ is obtained by setting $\alpha_{i\neq  j}=0$:
    \beq\chi_{\rho_{\mathrm{out}_j}}(\alpha_j)=\chi_{\rho_A}\left(\frac{\alpha_j}{\sqrt{N}}\right) \exp\left[-\frac12\left(\frac{N-1}{N} \right)\left|\alpha_j\right|^2\right].\eeq
It has the same form for all $j$, and the same property is therefore true for the reduced channel $\Lambda_j=\Tr_{ B\backslash B_j}\circ\Lambda$.

\subsection{Lower bound for the objectivity range of a pure loss channel}\label{appendix:b2}
The statement
\beq 
\label{nbound1}
\exists \{M_l\}_l: \forall j \in S,~ \exists\{\tau_{j,l}\}_l: ~	\|\Lambda_j - \cE_{M,\tau_{j}}  \|_{\diamond H,E} \leq {1 \over \delta} \zeta\ ,
\eeq
is equivalent to the inequality
\beq
\label{nbound2}
\inf_{M} \sup_{1 \leq j \leq N} \inf_{\tau_{j}} 	\|\Lambda_j - \cE_{M,\tau_j} \|_{\diamond H,E} \leq {1 \over \delta} \zeta\ .
\eeq
Note that we made explicit the dependence of the measure-and-prepare channels from POVM $M=\{M_l\}_l$ and set of states $\tau_j=\{\tau_{j,l}\}_l$ through the notation $\cE_{M,\tau_{j}}(X) \coloneqq  \sum_l   \Tr (M_l X) \tau_{j,l}$.
To investigate the optimality of the objectivity bound in Eq.~\eqref{nbound1} we thus need to estimate a \textit{lower} bound for the l.h.s.\ of Eq.~\eqref{nbound2}. We will perform this analysis for the channel in Eq.~\eqref{achannel}. Since the reduced dynamics $\Lambda_j$ have the same form for all $j$ we can get rid of the supremum on $j$ and look at a lower bound for the quantity 
\beq\label{mu}\mu(\Lambda) \coloneqq \inf_{M,\tau_{j}} 	\|\Lambda_j - \cE_{M,\tau_j}  \|_{\diamond H,E}\ .\eeq
By substituting the definition of diamond norm we obtain
\beq
\mu(\Lambda)=\inf_{M,\tau_j} 	\|\Lambda_j - \cE_{M,\tau_j}  \|_{\diamond H,E}
= \inf_{M,\tau_j}  \sup_{
	\substack{
		\rho :	\Tr\left[ \rho H_{A} \right] \leq E
	}
} \| \id_C \otimes (\Lambda_j - \cE_{M,\tau_j})_A  [\rho] \|_1\ ,
\eeq 
where $C$ is an arbitrary ancillary system (see Definition~\ref{diamondH} in the main text). To simplify the notation, in the following we suppress the explicit reference to the bipartition $C\!:\!A$. We can choose $\rho=\psi_r\coloneqq|\psi_r\rangle \langle \psi_r|$ with $|\psi_r\rangle\coloneqq\frac{1}{\cosh(r)} \sum_{n} \tanh(r)^n |nn\rangle$, which is a two-mode squeezed vacuum state, and (noting that $\Tr\left[ \rho H_{A} \right] = \sinh(r)^2$) find the inequality
\beq
\mu(\Lambda)\geq \inf_{M,\tau_j}  \sup_{
	\substack{
		r: \sinh(r)^2 \leq E
	}
} \| \id \otimes (\Lambda_j - \cE_{M,\tau_j})  [\psi_r] \|_1\ .
\eeq
The channel $\cE_{M,\tau_j}$ is entanglement breaking, so $ \id \otimes\cE_{M,\tau_j}[\psi_r]$ is a separable state:  $ \id\otimes\cE_{M,\tau_j}[\psi_r] = \omega \in SEP$. Since $\| X\|_1 \geq 2 \| X \|_\infty \ $ if $\Tr[X]=0$, we have that
\beq\begin{split}
	\mu(\Lambda)\geq &~ 2\inf_{\omega \in SEP} \sup_{
		\substack{
			r: \sinh(r)^2 \leq E
		}
	}   \| \id \otimes \Lambda_j [\psi_r] - \omega \|_\infty\, \\=&~2 \inf_{\omega \in SEP}\sup_{
		\substack{
			r: \sinh(r)^2 \leq E
		}
	}    \sup_{\phi} | \langle \phi |\id\otimes \Lambda_{j}  [\psi_r]|\phi \rangle - \langle \phi|\omega |\phi \rangle |\ ,
\end{split}
\eeq
where we substituted the definition of the infinity norm. We can choose, as $|\phi\rangle$, a two-mode squeezed vacuum state $|\phi_s\rangle=\frac{1}{\cosh(s)} \sum_{n} \tanh(s)^n |nn\rangle$, and get rid of the modulus to obtain
\beq
\mu(\Lambda)\geq ~2  \inf_{\omega \in SEP}  \sup_{
	\substack{
		r: \sinh(r)^2 \leq E
	}
}   \sup_{s} \left( \langle \phi_s |\id \otimes \Lambda_j  [\psi_r]|\phi_s \rangle - \langle \phi_s|\omega |\phi_s \rangle \right) .
\eeq
For a separable state $\omega$, $ |\langle \phi|\omega |\phi \rangle | \leq \lambda_{\max}$, where $\lambda$ is defined by the Schmidt decomposition: $|\phi \rangle= \sum_i \sqrt{\lambda_i}|e_i f_i \rangle$. Hence $ \langle \phi_s|\omega |\phi_s \rangle \leq \lambda_{\max} (\phi_s)= \frac{1}{\cosh(s)^2}$, and we have that
\beq
\mu(\Lambda)\geq ~ 2 \sup_{
	\substack{
		r: \sinh(r)^2 \leq E
	}
}\sup_{s}   \left( \langle \phi_s |\id \otimes \Lambda_j [\psi_r]|\phi_s \rangle -\frac{1}{\cosh(s)^2} \right).
\eeq
A calculation of the quantity $ \langle \phi_s |\id \otimes \Lambda_j [\psi_r]|\phi_s \rangle$ can be found in~\cite{extendibility}. By exploiting that result we find
\beq
\mu(\Lambda)\geq 2  \sup_{s} \left[ \left( \sup_{
	\substack{
		r: \sinh(r)^2 \leq E
	}
} \frac{N}{(\sqrt{N}\cosh(r)\cosh(s)- \sinh(r)\sinh(s))^2}\right) -\frac{1}{\cosh(s)^2} \right] .
\eeq
For a given $s$, the supremum of the function
\beq
\frac{N}{(\sqrt{N}\cosh(r)\cosh(s)- \sinh(r)\sinh(s))^2} 
\eeq
is reached for $r=\bar{r}$ such that $\bar{E} \coloneqq \sinh(\bar{r})^2 = \frac{\tanh(s)^2}{N - \tanh(s)^2}$. Since $\bar{E} \le\frac{1}{N-1} \le \frac{2}{N}$ for $N\geq2$ (and noting that $N\geq2$ by definition of the channel $\Lambda$), we can choose $E\ge\frac{2}{N}$ in order to have $\bar{E} \leq E$ satisfied for all possible values of $N$. This is equivalent to evaluate an unconstrained supremum, for which we can use the calculation performed in~\cite{extendibility} to obtain
\beq\begin{split}
\mu(\Lambda)&\geq 2  \sup_{s} \left[ \frac{N}{N\cosh(s)^2-\sinh(s)^2} -\frac{1}{\cosh(s)^2} \right] 
= 2  \sup_{s} \frac{\tanh(s)^2}{N\cosh(s)^2-\sinh(s)^2} \\ &\geq \dfrac{1}{2N-1}  \ge \frac{1}{2N} .
\end{split}
\eeq
Our analysis therefore led to the following result: when the dynamics from $A$ to $B_1,...,B_N$ is given by Eq.~\eqref{achannel} and the maximum energy of system $A$ satisfies $E\ge \frac{2}{N} $, for all $j$ and for all POVM $\{M_l\}_l$ and sets $\{\tau_{j,l}\}$ entering the definition of $\cE_j$ it holds that
\beq 
\|\Lambda_j- \cE_j  \|_{\diamond H,E} \ge \frac{1}{2N} .
\eeq
\section{Proof of Corollary~\ref{cor}} \label{appendix:c}

\renewcommand*{\thetheorem}{C\arabic{theorem}}
\renewcommand*{\thetmain}{M\arabic{tmain}}
\renewcommand*{\theexample}{C\arabic{example}}
\renewcommand*{\thelemma}{C\arabic{lemma}}
\renewcommand*{\thecorollary}{C\arabic{corollary}}
\renewcommand*{\theremark}{C\arabic{remark}}
\renewcommand*{\theobservation}{C\arabic{observation}}
\renewcommand*{\theproposition}{C\arabic{proposition}}
\renewcommand*{\thedefinition}{C\arabic{definition}}

To prove Corollary~\ref{cor}, it suffices to adapt to our infinite-dimensional setting the argument in the proof of~\cite[Corollary~4]{Brandao2015QD}. The success of this programme depends crucially on a fundamental result by Winter~\cite[Lemma~17]{tightuniform}, reported below as Lemma~\ref{meta}, which expresses a continuity bound for the conditional entropy of infinite-dimensional systems subjected to energy constraints. 

\begin{lemma}[{\cite[Lemma~17]{tightuniform}}] \label{meta}
 	For a Hamiltonian $H$ on $A$ satisfying the Gibbs hypothesis and any two states $\tau$ and $\sigma$ on the bipartite system $A \otimes B$ with $\Tr(\tau H), \Tr(\sigma H) \le E$, $\frac{1}{2}\|\tau - \sigma\|_1 \le \epsilon < \epsilon' \le 1$ and $\Delta = {\epsilon' - \epsilon \over 1 + \epsilon '}$,
 	\beq 
 	|S(A|B)_\tau - S(A|B)_\sigma | \le  (2 \epsilon' + 4\Delta) S(\gamma(E/\Delta)) + (1 + \epsilon')\,h\left({\epsilon' \over {1 + \epsilon'} }\right) + 2 h(\Delta)\ .
 	\eeq
\end{lemma}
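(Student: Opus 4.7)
The plan is to adapt the finite-dimensional argument of \cite[Corollary~4]{Brandao2015QD} to the present energy-constrained, infinite-dimensional setting, using Theorem~\ref{main_th_f} applied on the $B$ side together with Winter's continuity bound (Lemma~\ref{meta}) as the two main inputs. First I would invoke Theorem~\ref{main_th_f} with the system $B$ (equipped with Hamiltonian $H_B$) playing the role of the ``system'' and $\Lambda_N$ playing the role of the broadcasting channel. This yields a universal POVM $\{M_l\}_l$, a subset $S\subseteq\{1,\ldots,N\}$ with $|S|\geq(1-\delta)N$, and, for each $j\in S$, a measure-and-prepare channel $\cE_j(X)=\sum_l\Tr(M_lX)\tau_{j,l}$ satisfying $\|\Lambda_j-\cE_j\|_{\Diamond H_B,E_B}\leq\epsilon'\coloneqq\zeta/\delta$.

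Next I would fix any $\rho\in\mathcal{D}(A\otimes B)$ with $\Tr[\rho H_A]\leq E_A$ and $\Tr[\rho_B H_B]\leq E_B$, set $\tau\coloneqq(\id_A\otimes\Lambda_j)(\rho)$ and $\sigma\coloneqq(\id_A\otimes\cE_j)(\rho)$, and observe that the energy-constrained diamond norm bound directly gives $\|\tau-\sigma\|_1\leq\epsilon'$, while $\tau_A=\sigma_A=\rho_A$ inherits the $H_A$-energy bound. Applying Lemma~\ref{meta} with $\epsilon=\epsilon'/2$ produces exactly $\Delta=\tfrac12\epsilon'/(1+\epsilon')$ and controls $|S(A|B_j)_\tau-S(A|B_j)_\sigma|$ by the correction terms appearing in the statement. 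Because $S(A)_\tau=S(A)_\sigma$, the identity $I(A:B)=S(A)-S(A|B)$ transfers the same estimate verbatim to $|I(A:B_j)_\tau-I(A:B_j)_\sigma|$.

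To close the loop I will bound $I(A:B_j)_\sigma$ by the right-hand side of~\eqref{dis}. Writing $\cE_j=\mathcal{P}_j\circ\Gamma_M$, with $\Gamma_M(X)\coloneqq\sum_l\Tr(M_lX)\ketbra{l}$ the quantum-to-classical channel induced by the common POVM and $\mathcal{P}_j:\ketbra{l}\mapsto\tau_{j,l}$ a cptp post-processing, data processing for mutual information yields $I(A:B_j)_{\sigma}\leq I(A:L)_{(\id\otimes\Gamma_M)(\rho)}\leq\max_{\Gamma\in QC}I(A:B)_{(\id\otimes\Gamma)(\rho)}$, establishing~\eqref{dis}. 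For the asymptotic statement~\eqref{rs} the ``$\geq$'' direction is obtained for free by letting $\Lambda_N$ perform the optimal $\Gamma^\ast$ and broadcast its classical outcome to every $B_j$, which makes $\bbE_j I(A:B_j)$ coincide with the right-hand side. For the converse, I would decompose $\bbE_j I(A:B_j)$ into contributions from $j\in S$ and $j\notin S$: on $S$ I apply~\eqref{dis}, while on its complement I invoke the crude estimate $I(A:B_j)\leq 2S(\rho_A)\leq 2S(\gamma(E_A))$, finite by the Gibbs hypothesis on $H_A$. Sending $N\to\infty$ drives $\zeta$, $\epsilon'$, and $\Delta$ to zero so that the correction terms vanish; a final $\delta\to 0$ limit absorbs the complementary contribution and matches the lower bound.

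The hard part will be bookkeeping rather than conceptual: checking that the hypothesis $\epsilon<\epsilon'\leq 1$ of Lemma~\ref{meta} is met. The strict inequality is automatic from $\epsilon=\epsilon'/2$; the upper bound $\epsilon'\leq 1$ requires $N$ large enough that $\zeta/\delta\leq 1$, and for smaller $N$ the inequality~\eqref{dis} is vacuous since its right-hand side already exceeds $2S(\gamma(E_A))$. A secondary point to verify is that all energy constraints transport correctly through $\id_A\otimes\Lambda_j$ and $\id_A\otimes\cE_j$: this is automatic because both channels act trivially on $A$, preserving $\rho_A$ and its $H_A$-energy, while the $H_B$-constraint on $\rho_B$ is precisely what licenses the diamond-norm bound. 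With these points in place, the remaining steps are routine.
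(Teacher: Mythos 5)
You have not proved the statement in question. The statement is Lemma~\ref{meta} itself, i.e.\ Winter's energy-constrained continuity bound for the conditional entropy, and your proposal instead sets out to prove Corollary~\ref{cor}, explicitly listing ``Winter's continuity bound (Lemma~\ref{meta})'' as one of its two main inputs. As an argument for Lemma~\ref{meta} this is circular: the target inequality is assumed, not derived. Nothing in your text addresses what the lemma actually asserts, namely that closeness in trace norm of two states of $A\otimes B$ with bounded $H_A$-energy forces their conditional entropies $S(A|B)$ to be close, with an error controlled by the entropy $S(\gamma(E/\Delta))$ of the Gibbs state at rescaled energy.

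For the record, the paper does not reprove this lemma either: it imports it verbatim from \cite[Lemma~17]{tightuniform}. If you did want to supply a proof, the route is the Alicki--Fannes--Winter interpolation argument: from $\frac{1}{2}\|\tau-\sigma\|_1\le\epsilon$ one builds the intermediate state $\omega=\frac{1}{1+\epsilon'}\left(\tau+\epsilon'\,\tilde{\sigma}\right)$ with $\tilde{\sigma}$ obtained from the positive part of $\sigma-\tau$, exploits concavity and almost-convexity of the conditional entropy to reduce the problem to bounding the entropies of the correction states, and then uses the Gibbs hypothesis, via the fact that $\gamma(E)$ maximises entropy at mean energy $E$, to bound those entropies by $S(\gamma(E/\Delta))$; the binary-entropy terms $h\left(\frac{\epsilon'}{1+\epsilon'}\right)$ and $h(\Delta)$ arise from the mixing coefficients. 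Separately, the material you did write is essentially the paper's proof of Corollary~\ref{cor} (same decomposition over $S$ and its complement, same choice of $\epsilon=\epsilon'/2$, same data-processing step), but that is not the statement you were asked to establish.
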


\begin{proof}[Proof of Corollary~\ref{cor}]
Let $0<\delta<1$ be a fixed number. Theorem~\ref{main_th_f} allows us to construct a POVM $\{M_l\}_l$, a set $S\subseteq \{1,\ldots, N\}$ of cardinality at least $|S|\geq (1-\delta)N$, and ensembles of states $\{\tau_{j,l}\}_l$ such that the corresponding measure-and-prepare channels $\cE_j$ defined in Eq.~\eqref{m&p channel} satisfy Eq.~\eqref{almost m&p} and~\eqref{zeta_main} for all $j\in S$. Now, consider the states $\tau = (\id \otimes \Lambda_j)(\rho)$ and $\sigma= (\id \otimes \cE_j)(\rho)$. By definition of $f$-diamond norm it follows that
\begin{align*}
 \frac{1}{2}\|\tau - \sigma\|_1 &= \frac12 \|(\id \otimes \Lambda_j)(\rho) - (\id \otimes \cE_j)(\rho)\|_1 \\
 &\le \frac12 \|\Lambda_j - \cE_j\|_{\Diamond H_B,E_B} \\
 &\le \epsilon  < \epsilon' \le 1\, ,
\end{align*}
where the inequalities in the last line follow from Theorem~\ref{main_th_f}, and we set $\epsilon' \coloneqq 2 \epsilon \coloneqq {\zeta \over \delta}$, with $\zeta$ given in Eq.~\eqref{zeta_main}.

Applying Lemma~\ref{meta} to states $\tau$ and $\sigma$, we deduce that
\beq \label{S}
|S(A|B)_{\id \otimes \Lambda_j(\rho)} - S(A|B)_{\id \otimes \cE_j(\rho)} | \le  (2 \epsilon' + 4\Delta) S(\gamma(E_A/\Delta)) + (1 + \epsilon')\,h\left({\epsilon' \over {1 + \epsilon'} }\right) + 2 h(\Delta)
\eeq
where $\Delta \coloneqq \frac{1}{2}{\epsilon' \over 1 + \epsilon '}$. Since the reduced entropies on the $A$ subsystems are the same for $\tau$ and $\sigma$, this translates to
 \begin{align*}
 &\left|I(A:B)_{(\id \otimes \Lambda_j)(\rho)} - I(A:B)_{(\id \otimes \cE_j)(\rho)} \right| \\ &\quad \le  (2 \epsilon' + 4\Delta) S(\gamma(E_A/\Delta)) + (1 + \epsilon')\, h\left({\epsilon' \over {1 + \epsilon'} }\right) + 2 h(\Delta) ~,
 \end{align*}
 and therefore
 \begin{align*}
 &I(A:B)_{\id \otimes \Lambda_j(\rho)} \\ & \quad \le I(A:B)_{\id \otimes \cE_j(\rho)}  +  (2 \epsilon' + 4\Delta) S(\gamma(E_A/\Delta)) + (1 + \epsilon')\, h\left({\epsilon' \over {1 + \epsilon'} }\right) + 2 h(\Delta) \\
 & \quad\le \max_{\Gamma \in QC} I(A:B)_{(\id \otimes \Gamma)(\rho)}  +  (2 \epsilon' + 4\Delta) S(\gamma(E_A/\Delta)) + (1 + \epsilon')\, h\left({\epsilon' \over {1 + \epsilon'} }\right) + 2 h(\Delta)
\, ,
 \end{align*}
where the last inequality follows because any measure-and-prepare channel can be obtained by post-processing from a quantum-to-classical channel, and the mutual information obeys the data processing inequality.

We now move on the proof of Eq.~\eqref{rs}. The fact that the right hand side is no larger than the left hand side is well known; to prove it, it suffices to choose as $\Lambda$ the quantum-to-classical map that attains the accessible information $I(A:B_a)\coloneqq \max_{\Gamma \in QC} I(A:B)_{(\id \otimes \Gamma)(\rho)}$, makes $N$ copies of the classical result, and stores it in $N$ registers $B_1\ldots B_N$. 

As it turns out, we only have to prove that the l.h.s.\ of Eq.~\eqref{rs} is no larger than the the r.h.s.\ .
By using Eq.~\eqref{dis} we can write
 \begin{align}
 	&\bbE_j 	I(A:B_j) \\ & \quad\le \frac{1}{N}\mbox{$\left[(1-\delta)N   \left(I(A:B_a)  +  (2 \epsilon' + 4\Delta) S(\gamma(E_A/\Delta)) + (1 + \epsilon')\,h\left({\epsilon' \over {1 + \epsilon'} }\right) + 2 h(\Delta)\right) + \delta N 2S(A) \right]$} \nonumber\\ &\quad=\mbox{$(1-\delta) \left(I(A:B_a) + (2 \epsilon' + 4\Delta) S(\gamma(E_A/\Delta)) + (1 + \epsilon')\, h\left({\epsilon' \over {1 + \epsilon'} }\right) + 2 h(\Delta)\right) + \delta  2S(A)$}\ ,\nonumber 
 \end{align}
where we used the notation $I(A:B_j) \coloneqq I(A:B)_{\id \otimes \Lambda_j(\rho)}$. We can choose $\delta = \sqrt{\zeta}$, then
 \beq
 \epsilon' = 2 \epsilon={\zeta \over \delta} \xrightarrow[N \to \infty]{} 0~,
 \eeq
 and therefore
 \beq
 \Delta = \frac{1}{2}{\epsilon' \over 1 + \epsilon '}\xrightarrow[N \to \infty]{} 0~.
 \eeq
 Moreover, since $S(\gamma(E_A))=o(E_A)$ ~\cite{tightuniform}, we have that $\Delta\: S(\gamma(E_A/\Delta)) \xrightarrow[\Delta \to 0]{} 0$, as well as $\epsilon' S(\gamma(E_A/\Delta)) \xrightarrow[\epsilon',\Delta \to 0]{} 0$ (since $\epsilon' = O(\Delta)$). As a consequence, for our choice of $\delta$,
 \begin{align} 
& \bbE_j 	I(A:B_j) \nonumber \\ & \quad  \le\mbox{$(1-\delta) \left(I(A:B_a)  +  (2 \epsilon' + 4\Delta) S(\gamma(E_A/\Delta)) + (1 + \epsilon')h({\epsilon' \over {1 + \epsilon'} }) + 2 h(\Delta)\right) + \delta  2S(A)$} \nonumber \\ & \quad \xrightarrow[N \to \infty]{}I(A:B_a)\,,
 \end{align}
 independently of the choice of $\Lambda=\Lambda_{B \rightarrow B_1 B_2 ... B_N}$. By considering the maximum of $\bbE_j I(A:B_j)$ over $\Lambda_{B \rightarrow B_1 B_2 ... B_N}$ and then the limit $N \to \infty$ we therefore obtain that
 \beq
 \lim_{N \to \infty} \max_{\Lambda_{B \rightarrow B_1 B_2 ... B_N}} \bbE_j 	I(A:B_j) \le  I(A:B_a).
 \eeq
 \end{proof}


\end{document}